\newtheorem{theorem}{Theorem}[section]
\newtheorem{corollary}{Corollary}[theorem]
\newtheorem{lemma}[theorem]{Lemma}
\theoremstyle{definition}
\newtheorem{definition}[theorem]{Definition}
\theoremstyle{remark}
\newtheorem{remark}[theorem]{Remark}
\newtheorem{example}[theorem]{Example}
\def\real{{\mathbb R}}
\def\nonnegreal{{\mathbb R_{\ge 0}}}
\def\int{{\mathbb Z}}
\def\positiveint{{\mathbb Z_{> 0}}}
\def\game{{\mathcal{G}}}
\DeclareMathOperator{\carpro}{\square}
\def\cc#1{\mathtt{#1}}
\DeclareMathOperator{\NP}{\cc{NP}}
\DeclareMathOperator{\PPAD}{\cc{PPAD}}
\DeclareMathOperator{\PLS}{\cc{PLS}}
\DeclareMathOperator{\PPADPLS}{\cc{PPAD}\cap\cc{PLS}}
\title{On Finding Pure Nash Equilibria of Discrete Preference Games and Network Coordination Games}
\shorttitle{On Finding PNEs of Discrete Preference Games \& Network Coordination Games}
\author{
 Takashi Ishizuka\\
  Graduate School of Mathematics,\\
  Kyushu University,\\
  744 Motooka, Nishi-ku, Fukuoka, Japan\\
  \texttt{ishizuka.takashi.664@s.kyushu-u.ac.jp} \\
   \And
 Naoyuki Kamiyama\\
  Institute of Mathematics for Industry,\\
  Kyushu University,\\
  744 Motooka, Nishi-ku, Fukuoka, Japan\\
  \texttt{kamiyama@imi.kyushu-u.ac.jp} \\
}
\begin{document}
\maketitle
\begin{abstract}
	This paper deals with the complexity of the problem of computing a pure Nash equilibrium for discrete preference games and network coordination games beyond $O(\log n)$-treewidth and tree metric spaces.
	First, we estimate the number of iterations of the best response dynamics for a discrete preference game on a discrete metric space with at least three strategies.
	Second, we present a sufficient condition that we have a polynomial-time algorithm to find a pure Nash equilibrium for a discrete preference game on a grid graph.
	Finally, we discuss the complexity of finding a pure Nash equilibrium for a two-strategic network coordination game whose cost functions satisfy submodularity. In this case, if every cost function is symmetric, the games are polynomial-time reducible to a discrete preference game on a path metric space.
\end{abstract}

\keywords{Equilibrium Computation \and Discrete Preference Game \and Network Coordination Game}

\section{Introduction} \label{SecIntroduction}
A {\it graphical game}, introduced by Kearns et al.\ \cite{KLS01}, is a succinctly represented multi-player strategic form game. 
A  graphical game consists of an undirected graph $G = (V, E)$, where $V$ is a finite set of players and every edge in $E$ represents the interaction between its endpoints, and for each player $i \in V$, a finite set of strategies $S_i$ and a cost function $C_{i} \colon \prod_{j \in N(i) \cup \{ i \} } S_{j} \to \nonnegreal$, where $N(i)$ is the set of neighbors of player $i$, that is, $N(i) = \{ j \in V ; \{i, j\} \in E\}$.

We refer to the tuple of strategies played by each player as a strategy profile. The set $S = \prod_{i \in [n]}S_{i}$ is called a set of strategy profiles.
For a player $i \in V$, we denote by $S_{-i}$ the set of strategies for all players except $i$.
For a strategy profile $x = (x_i)_{i \in V}$, we denote by $x_i$ the strategy played by a player $i$, and by $x_{-i}$ the strategies of all players except $i$.

A pure Nash equilibrium is an intuitive and essential concept of rationality. 
A pure Nash equilibrium is a strategy profile such that every player has no incentive to change her selected strategy.
Every player aims at minimizing her cost. A strategy profile $x^* = (x_{i}^{*})_{i \in V}$ is a pure Nash equilibrium if for each player $i \in V$, and every strategy $x_i \in S_i$, $C_i(x_i^*, x^*_{-i}) \le C_i(x_i, x_{-i}^*)$ holds.
Note that graphical games do not always have pure Nash equilibria because any two-player strategic form game is a graphical game.

The complexity of finding a pure Nash equilibrium on a graphical game is one of the most interesting topics of Algorithmic Game Theory. 
Unfortunately, it is intractable to determine the existence of a pure Nash equilibrium for a graphical game. Gottlob et al.\ \cite{GGS05} have proven that the problem of deciding whether there exists a pure Nash equilibrium for a given graphical game is $\NP$-hard.
On the other hand, Daskalakis and Papadimitriou \cite{DP06} have shown that it is polynomial-time decidable whether there is a pure Nash equilibrium on a graphical game whose players' network has $O(\log n)$-treewidth. Furthermore, their result has stated that we can find a pure Nash equilibrium in polynomial time if it exists for such a graphical game.

We wish to understand what properties make it hard to compute a pure Nash equilibrium for a graphical game and make it easy to do. This paper focuses on the class of graphical games that are guaranteed the existence of pure Nash equilibria. There are well-known classes of graphical games that always have a pure Nash equilibrium; a {\it discrete preference game} and a {\it network coordination game} are examples.

A discrete preference game with a parameter $\game = (G, \mathcal{M}, ( \beta_{i} ), \alpha)$, which is the fundamental model introduced by Chierichetti et al.\ \cite{CKO18}, consists of an unweighted graph $G = (V, E)$, a finite metric space $\mathcal{M} = (L, d)$, a preferred strategy $\beta_i \in L$ for each player $i \in V$, and a parameter $0 \le \alpha < 1$. Every player has the identical strategy set $L$.
Given a strategy profile $x = (x_i)_{i \in V}$, the cost for player $i$ is:
\begin{align}
	c_i(x) = \alpha d(x_i, 	\beta_{i}) + (1 - \alpha) \sum_{ j \in N(i)} d(x_i, x_j).
\end{align}

A network coordination game $\game = (G, ( S_{i} ), ( C_{i, j}, C_{j, i} ) )$ is defined by: (i) an undirected graph $G = (V, E)$; (ii) for each edge $\{i, j\} \in E$, there are two cost functions $C_{i, j} \colon S_i \times S_j \to \nonnegreal$ and  $C_{j, i} \colon S_j \times S_i \to \nonnegreal$ that satisfy $C_{i, j}(x_i, x_j) = C_{j, i}(x_j, x_i)$ for all $x_i \in S_i$ and $x_j \in S_j$; (iii) the total cost for a player $i \in V$ is the sum of all her costs, i.e., $C_i(x) = \sum_{ j \in N(i) }C_{i, j}(x_i, x_j)$.

The results of the hardness of computing a pure Nash equilibrium for a discrete preference game and a network coordination game are known. Lolakapuri et al.\ \cite{LBNPD19} have proven that finding a pure Nash equilibrium on a discrete preference game is $\PLS$-complete even if the maximum degree of the players' network is $7$. Cai and Daskalakis \cite{DP06} have shown the $\PLS$-completeness of computing a pure Nash equilibrium for a network coordination game even if the maximum degree of the players' network is five and each player has two strategies.
On the other hand, Lolakapuri et al.\ \cite{LBNPD19} have proven that a pure Nash equilibrium for a discrete preference game on a tree metric space is polynomial-time computable.

The following computational aspects of pure Nash equilibria are still unknown for discrete preference games and network coordination games:
\begin{itemize}
	\item How hard is computing a pure Nash equilibrium for a discrete preference game on a non-tree metric space?
	\item Can we find pure Nash equilibria in polynomial time for a discrete preference game and a network coordination game if the maximum degree of the players' networks is four?
\end{itemize}
This paper deals with the above topics. In particular, we discuss the complexity of finding a pure Nash equilibrium for a discrete preference game on neither $O(\log n)$-treewidth nor a tree metric space.
First, we estimate an upper bound of the number of iterations of the best response dynamics for a discrete preference game on a discrete metric space to compute a pure Nash equilibrium.
Second, we provide a sufficient condition that we have a polynomial-time algorithm to find a pure Nash equilibrium of such a discrete preference game.
Finally, we present a relationship between discrete preference games and network coordination games.

\subsection{Our Results}
\paragraph{Discrete preference game on the discrete metric}
A discrete metric space with at least three strategies is one of the simple non-tree metric spaces.
It is important to consider and understand the complexity of a discrete preference game with such a metric space. Recall that a discrete preference game was formulated based on a decision-making model wherein agents decide which platform to use \cite{LBNPD19}. Note that the metric space implies that every agent is only interested in being on the same or different platforms. Namely, a discrete preference game on a discrete metric space is one of the uncomplicated settings of decision-making models. 

Section \ref{SecDiscreteMetric} provides an upper bound for the number of iterations of the best response dynamics for a discrete preference game on the discrete metric.
We show that the best response dynamics halts after quadratic iterations when we view the given parameter as a constant.

\paragraph{Discrete preference games on grid graph}
Our motive behind this work is to clarify the boundary between cases where we can find a pure Nash equilibrium in polynomial time for the numbers of players and strategies and cases where it is not\footnote{Note that the games dealt with in this paper are guaranteed the existence of pure Nash equilibria. This fact implies that we can trivially find it in polynomial time when we regard the number of players as a constant. On the other hand, it is not always possible to compute a pure Nash equilibrium in polynomial time when the number of strategies is considered a constant.}.
As mentioned above, the complexity of finding a pure Nash equilibrium on a graph with degree four is unknown for discrete preference games and network coordination games. Hence, it is important to clarify the complexity of finding a pure Nash equilibrium for a discrete preference game on a two-dimensional grid graph. A two-dimensional grid graph is one of the graphs whose maximum degree is four.

Remark that Section \ref{SecReductionfromDPGtoNCG} shows the relationship between discrete preference games and network coordination games. In particular, we prove that there is a polynomial-time reduction such that the structure of the players' network is preserved from a discrete preference game to a network coordination game.
This fact implies that the hardness result for a network coordination game straightforwardly follows from the hardness results for a discrete preference game. Therefore, it is a natural approach to deal with the complexity of discrete preference games first, under negative conjecture.

Section \ref{SecDPGonGridGraph} provides a sufficient condition that we have a polynomial-time algorithm to find a pure Nash equilibrium of a discrete preference game on a grid graph.
To prove this condition, Section \ref{SecCartesianGame} introduces a more general discrete preference game, called a cartesian game, in which a discrete preference game is constructed from some discrete preference games.
We show that it can efficiently construct a pure Nash equilibrium for a cartesian game from pure Nash equilibria for the discrete preference games that form that cartesian game.
Our results are the first polynomial-time computability of discrete preference games on neither $O(\log n)$-treewidth nor tree metric spaces.

\subsection{Related Works}
Els\"{a}sser and Tscheuschner \cite{ET11} have proved that the problem of computing a pure Nash equilibrium for a \textsc{Mac-Cut} game, which is a special case of network coordination games, is $\PLS$-complete even if the maximum degree of the players' network is five. Remark that the $\PLS$-hardness of a discrete preference game and a network coordination game relies on the $\PLS$-hardness of a \textsc{Mac-Cut} game \cite{CD11, LBNPD19}.
There is a positive result about computing a pure Nash equilibrium on a network coordination game: Poljak \cite{Pol95} has proven that we have a polynomial-time algorithm for computing a pure Nash equilibrium of a \textsc{Mac-Cut} game whose players' network is a cubic graph.

It is well-known that every graphical game always has a mixed Nash equilibrium. However, it is also hard to find a mixed Nash equilibrium.
Chen et al.\ \cite{CDT09} proved that the problem of finding a mixed Nash equilibrium on a polymatrix game is $\PPAD$-complete. Even if the players' network is a tree, computing a mixed Nash equilibrium is still hard \cite{DFS20}.
Naturally, not all problems of finding a mixed Nash equilibrium are intractable. 
Cai et al.\ \cite{CCDP16} have proven that a pure Nash equilibrium on a zero-sum polymatrix game can be found in polynomial time.
Elkind et al.\ \cite{EGG06} have shown that we can compute a mixed Nash equilibrium for a polymatrix game when a players' network is a path and each player has two strategies.

It seems easy to compute a mixed Nash equilibrium on a network coordination game. Cai and Daskalakis \cite{CD11} have pointed out that such a problem belongs to $\PPADPLS$. However, it is still unknown which is true: We have a polynomial-time algorithm for finding a mixed Nash equilibrium of a network coordination game, or such a problem is $\PPADPLS$-complete.
Babichenko and Rubinstein \cite{BR21} have proven the $\PPADPLS$-completeness of a {\it polytensor identical interest game}, which is a generalization of a network coordination game.

\section{Preliminaries}
\paragraph{Basic Notations}
We denote by $\positiveint$ and $\nonnegreal$ the sets of positive integers and non-negative real numbers, respectively.
We use $[n] = \{1, 2, \dots, n \}$ for $n \in \positiveint$.
A space $(L, d)$ is a metric space if the function $d \colon L \times L \to \nonnegreal$ satisfies the following conditions:  for all $x, y, z \in L$, (i) $d(x, y) = 0$ if and only if $x = y$; (ii) $d(x, y) = d(y, x)$; and (iii) $d(x, y) \le d(x, z) + d(z, y)$.
Specifically, we refer to a metric space $(L, d)$ whose distance satisfies that $d(x, y) = 1$ whenever $x \neq y$ as a discrete metric.

A graph metric is represented by an edge-weighted undirected graph. The distance between any pair of points is the weight of the minimum weight path in the graph between the corresponding vertices. A graph metric is a tree or path metric if the graph is a tree or a path, respectively.

For some $\ell \in \positiveint \cup \{ \infty \}$, a strategy space $\mathcal{M} = (L, d)$ is the $\ell$-product metric space of $k$ metric spaces $\mathcal{M}_1 = (L_1, d_1), \dots, \mathcal{M}_k = (L_k, d_k)$ if $L = L_1 \times \cdots \times L_k$ and for any two points $x = (x^1, \dots, x^k)$ and $y = (y^1, \dots, y^k)$ in $L$, the distance $d(x, y)$ is defined as $\| ( d_1(x^1, y^1), \dots, d_k(x^k, y^k) ) \|_{\ell}^{\ell}$, where $\| \cdot \|_{\ell}$ means the $\ell$-norm if $\ell \in \positiveint$; otherwise we define $d(x, y) = \max_{ t \in [k] } \{ d_{t}(x^t, y^t) \}$.

\paragraph{Potential Games}
A game is an {\it exact potential game} if there exists a function $\Phi \colon S \to \real$ such that for all $s_{-i} \in S_{ - i }$, $s_{i}, t_{i} \in S_i$, $\Phi(s_i, s_{-i}) - \Phi(t_i, s_{-i}) = C_{i}(s_i, s_{ - i}) - C_{i}(t_{i}, s_{-i})$, where $S$ is the set of strategy profiles, and  $C_{i}$ is the cost for player $i$. In this paper, we call such a function an exact potential function for the game.
A game is a {\it generalized ordinal potential game} if there is a function $\Phi \colon S \to \real$ such that for all $s_{-i} \in S_{ - i }$, $s_{i}, t_{i} \in S_i$, $\Phi(s_i, s_{-i}) > \Phi(t_i, s_{-i})$ whenever $C_{i}(s_i, s_{ - i}) > C_{i}(t_{i}, s_{-i})$. We call such a function a generalized ordinal potential function for the game.
The existence of pure Nash equilibrium for some variants of potential games can be found in Chapter 2.2 of \cite{LHS16}.
Notice that we can easily see that an exact potential game always has a pure Nash equilibrium since the best response dynamics, which is described in Section \ref{SecDiscreteMetric}, halts after the finite number of iterations.

\section{Discrete Preference Games on Discrete Metric Spaces} \label{SecDiscreteMetric}
In this section, we estimate an upper bound of the number of iterations of the best response dynamics for a discrete preference game with a parameter on a discrete metric space.
Recall that the two-strategic case was studied by previous work \cite{CKO18, FGV16}. 
We now focus on the case when there are three or more strategies.

Let $\game = \big( G = (V, E), \mathcal{M} = (L, d), (\beta_{i})_{ i \in V}, \alpha \big)$ be a discrete preference game with a parameter $0 \le \alpha < 1$, where the metric space $\mathcal{M}$ is a discrete metric space.
We define the potential $\Phi$ for a strategy profile $x = (x_i)_{i \in V}$ as 
\begin{align} \label{Eq:ExactPotentialFunction-DPGwithParameter}
	\Phi(x) = \sum_{i \in V} \alpha d(x_i, \beta_i) + ( 1 - \alpha ) \sum_{ \{ i, j \} \in E} d(x_i, x_j).
\end{align}
Note that the above function $\Phi$ is an exact potential function for $\game$ \cite{CKO18}.
Therefore, any player decreases her cost if and only if the potential $\Phi$ also decreases by the same value.

The best response dynamics follows the following procedure:
	While the current strategy profile is not a pure Nash equilibrium, pick an arbitrary player who wants to deviate from the current strategy profile, and she will change her strategy to a best response.
Note that there is only one player moving strategy at each step in the best response dynamics.

The following theorem gives an upper bound of the number of iterations of the best response dynamics for a discrete preference game with a parameter.
Then we denote by $\Phi_{\max}$ the potential of the start point.  

\begin{theorem} \label{TheoremUpperBoundforBestResponseDynamics}
The best response dynamics for a discrete preference game on a discrete metric space halts after at most $\mu(\alpha)^{-1} \Phi_{\max}$ steps, where $\mu(\alpha) = \min\{ 1 - \alpha,  \alpha + (1 - \alpha) \lfloor 1 - \alpha/(1 - \alpha) \rfloor, -\alpha + (1 - \alpha) \lfloor 1 + \alpha/(1 - \alpha) \rfloor \}$, and $\Phi_{\max}$ is the potential of the start point.
\end{theorem}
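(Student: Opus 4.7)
The plan is to exploit the fact that $\Phi$ in \eqref{Eq:ExactPotentialFunction-DPGwithParameter} is an exact potential for $\game$, so every best-response step of player $i$ from her current strategy $s = x_i$ to a new strategy $t$ decreases $\Phi$ by exactly the amount that $c_i$ decreases. Since $\Phi \ge 0$ and starts at $\Phi_{\max}$, it suffices to show that this decrease is at least $\mu(\alpha)$ on every step; the stated bound on the number of iterations then follows immediately.

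To identify the smallest possible strictly positive decrease, I would first rewrite $c_i$ in discrete-metric form. Let $k = |N(i)|$ and, for each $s \in L$, let $k_s$ be the number of neighbors of $i$ currently playing $s$. Because $d$ is the discrete metric, $c_i$ equals $(1-\alpha)(k - k_s)$ when $s = \beta_i$ and $\alpha + (1-\alpha)(k - k_s)$ otherwise. So a move from $s$ to $t$ changes $c_i$ by $(1-\alpha)(k_s - k_t)$ together with a $\pm\alpha$ adjustment governed by how $s$ and $t$ relate to $\beta_i$. I would then split into three cases: (a) $s = \beta_i$ and $t \ne \beta_i$; (b) $s \ne \beta_i$ and $t = \beta_i$; (c) $s \ne \beta_i$, $t \ne \beta_i$, with $s \ne t$.

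In case (c) the change is $(1-\alpha)(k_s - k_t)$ with $k_s - k_t \in \int$, so any strict improvement forces $k_s - k_t \le -1$, giving a decrease of at least $1-\alpha$. In case (a) the change equals $\alpha - (1-\alpha)(k_t - k_s)$; the strict-improvement condition is $k_t - k_s > \alpha/(1-\alpha)$, and since $k_t - k_s$ is integral, the smallest admissible value is $\lfloor \alpha/(1-\alpha) \rfloor + 1 = \lfloor 1 + \alpha/(1-\alpha) \rfloor$, which yields a minimum decrease of $-\alpha + (1-\alpha)\lfloor 1 + \alpha/(1-\alpha) \rfloor$. A symmetric computation for case (b) produces the decrease $\alpha + (1-\alpha)\lfloor 1 - \alpha/(1-\alpha) \rfloor$, where I would use the identity $\lfloor 1 - x \rfloor = 1 - \lceil x \rceil$ to recast the natural ceiling expression into the floor form used in $\mu(\alpha)$. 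Taking the minimum over the three cases recovers $\mu(\alpha)$ exactly.

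The most delicate part is the bookkeeping for the floor/ceiling quantities in cases (a) and (b), which must handle cleanly both the generic situation and the boundary case where $\alpha/(1-\alpha)$ is itself an integer (e.g., $\alpha = 1/2$), so that the ``strictly greater than'' and ``strictly less than'' conditions on $k_t - k_s$ still produce the claimed floor formulas. Once this case analysis is in place, the theorem follows by observing that $\Phi$ decreases by at least $\mu(\alpha)$ per step and cannot fall below zero, so the number of iterations is at most $\Phi_{\max}/\mu(\alpha)$.
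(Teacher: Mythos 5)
Your proposal is correct and takes essentially the same route as the paper's proof: the same exact-potential argument with the same three-case analysis (your $k_s$-bookkeeping is the paper's $D_i(x) = |N(i)| - k_{x_i}$ in disguise), yielding the same per-step lower bound $\mu(\alpha)$ and hence the bound $\mu(\alpha)^{-1}\Phi_{\max}$. Your uniform use of the fact that the smallest integer strictly exceeding $x$ is $\lfloor x \rfloor + 1 = \lfloor 1 + x \rfloor$ even streamlines the paper's separate treatment of the boundary case where $\alpha/(1-\alpha)$ is an integer.
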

\begin{proof}
	We denote by $D_i(x)$ the number of neighbors that plays a strategy different from $i$'s strategy, i.e., $D_i(x) = | \{ j \in N(i) ; x_i \neq x_j\}|$.
	Then given a strategy profile $x = (x_v)_{ v \in V}$, the cost of player $i$ is: $c_i(x) = \alpha d(x_i, \beta_i) + (1 - \alpha) D_i(x)$.
	
	We consider the best response dynamics.
	Let $x = (x_i)_{v \in V}$ be a current strategy. In this step, a player $i \in V$ moves her strategy from $x_i$ to $y_i$, and $i$'s cost strictly decreases.
	Then, there are three possible cases:
	\begin{itemize}
		\item If $x_i \neq \beta_i \neq y_i$, then we have $0 < c_i(x_i, x_{-i}) - c_i(y_i, x_{ - i}) = (1-\alpha) (D_i(x_i, x_{-i}) -D_i(y_i, x_{-i}))$. In this case, it satisfies that $D_i(x_i, x_{-i}) -D_i(y_i, x_{-i}) > 0$. Notice that $D_i( \cdot )$ is a non-negative integer, and hence, $\Phi(x_i, x_{-i}) - \Phi(y_i, x_{ - i}) = c_i(x_i, x_{-i}) - c_i(y_i, x_{ - i}) \ge (1 - \alpha) > 0$ holds.
		\item If $x_i \neq \beta_i = y_i$, then we have $0 < c_i(x_i, x_{-i}) - c_i(y_i, x_{ - i}) = \alpha + (1-\alpha) (D_i(x_i, x_{-i}) -D_i(y_i, x_{-i}))$. In this case, it satisfies that  $D_i(x_i, x_{-i})  - D_i(y_i, x_{-i}) > - \alpha/(1 - \alpha)$. Note that $D_{i}(\cdot)$ is a non-negative integer. If $- \alpha/(1 - \alpha)$ is an integer, then it holds that $D_i(x_i, x_{-i})  - D_i(y_i, x_{-i}) \ge 1 - \alpha/(1 - \alpha)$, and otherwise, it holds that $- \alpha/(1 - \alpha) < \lceil - \alpha/(1 - \alpha) \rceil = \lfloor 1 - \alpha/(1 - \alpha) \rfloor \le D_i(x_i, x_{-i})  - D_i(y_i, x_{-i})$. Hence, we have $D_i(x_i, x_{-i})  - D_i(y_i, x_{-i}) \ge \lfloor 1 - \alpha/(1 - \alpha) \rfloor$. This implies that $\Phi(x_i, x_{-i}) - \Phi(y_i, x_{ - i}) = c_i(x_i, x_{-i}) - c_i(y_i, x_{ - i}) \ge \alpha + (1 - \alpha) \lfloor 1 - \alpha/(1 - \alpha) \rfloor > 0$.
		\item If $x_i = \beta \neq y_i$, then we have  $0 < c_i(x_i, x_{-i}) - c_i(y_i, x_{ - i}) = -\alpha + (1-\alpha) (D_i(x_i, x_{-i}) -D_i(y_i, x_{-i}))$. In this case, it satisfies that $D_i(x_i, x_{-i}) - D_i(y_i, x_{-i}) > \alpha/(1 - \alpha)$. Note that $D_{i}(\cdot)$ is a non-negative integner. If $\alpha/(1 - \alpha)$ is an integer, then it holds that $D_i(x_i, x_{-i}) - D_i(y_i, x_{-i}) \ge 1 + \alpha/(1 - \alpha)$, and otherwise, it holds that $\alpha/(1 - \alpha) < \lceil \alpha/(1 - \alpha) \rceil \le \lfloor 1 + \alpha/(1 - \alpha) \rfloor \le D_i(x_i, x_{-i}) - D_i(y_i, x_{-i})$. Therefore, we have $D_i(x_i, x_{-i}) - D_i(y_i, x_{-i}) \ge \lfloor 1 + \alpha/(1 - \alpha) \rfloor$. This implies that $\Phi(x_i, x_{-i}) - \Phi(y_i, x_{ - i}) = c_i(x_i, x_{-i}) - c_i(y_i, x_{ - i}) \ge -\alpha + (1 - \alpha) \lfloor 1 + \alpha/(1 - \alpha) \rfloor > 0$.
	\end{itemize}
	From the above observation, at each step of the best response dynamics, the potential $\Phi$ decreases at least $\mu(\alpha)$.
	Therefore, the best response dynamics halts after at most $\mu(\alpha)^{-1} \Phi_{\max}$ steps, where $\Phi_{\max}$ is the potential for the start point.
\end{proof}

\begin{remark}
When we view the given parameter $\alpha$ as a constant, the best response dynamics halts after at most $O(n^2)$ iterations by Theorem \ref{TheoremUpperBoundforBestResponseDynamics} because the exact potential function $\Phi(x)$ in $O(n^2)$. Note that it halts after at most $O(n)$ iterations in the two-strategic setting by a technical way to select a player who moves her strategy at each step even if the parameter $\alpha$ is non-constant \cite{CKO18}.
\end{remark}

\section{Discrete Preference Games on Grid Graphs} \label{SecDPGonGridGraph}
We present the special case of a discrete preference game whose pure Nash equilibria can be found in polynomial time beyond $O(\log n)$-treewidth and tree metrics. 

We consider a discrete preference game with a parameter on a $k$-dimensional grid graph.
We call a graph $G = (V, E)$ a $k$-dimensional grid graph if there are $k$ positive integers $M_1, \dots, M_k$ such that $V = [M_1] \times \cdots \times [M_k]$ and there is an edge $\{ i, j \} \in E$ if $\| i - j \|_{1} = 1$.

Now, we prove that there is a polynomial-time algorithm to compute a pure Nash equilibrium for a discrete preference game $\game = (G, \mathcal{M}, (\beta_{i})_{i \in V}, \alpha)$ on $k$-dimensional grid graph $G$ if the following two conditions hold:
\begin{enumerate}
\renewcommand{\theenumi}{(\Alph{enumi})}
\renewcommand{\labelenumi}{\theenumi}
	\item  \label{ConditionA} $\mathcal{M} = (L, d)$ is a $1$-product metric space of $k$ arbitrary finite metric spaces $\mathcal{M}_1 = (L_1, d_1),\linebreak \dots, \mathcal{M}_k = (L_k, d_k)$; and
	\item  \label{ConditionB} we can select a strategy $\beta_{i_t}^{t} \in L_{t}$ for each $t \in [k]$ and each $i_{t} \in [M_t]$ so that the set $\{ \beta_{i_t}^{t} \in L_{t} ~;~ t \in [k], i_{t} \in [M_t] \}$ satisfies the following condition: for each player $i = (i_1, \dots, i_k) \in V$, the preferred strategy $\beta_{i}$ is a form of  $(\beta_{i_{1}}^{1}, \dots, \beta_{i_{k}}^{k}) \in L$.
\end{enumerate}
In other words, the second condition implies that the $t$-th element of the preferred strategy $\beta_{i}$ of a player $i$ is $\beta_{i_{t}}^{t}$ whenever the $t$-th player of $i$ is $i_{t}$.
For instance, we consider a discrete preference game on a two-dimentional grid graph $G = ([N_1] \times [N_2], E)$ satisfying the above two conditions.
The condition \ref{ConditionB} implies that the player $(i_1, i_2)$ prefers the strategy $(\beta_{i_1}^{1}, \beta_{i_2}^{2})$ if a player $(i_1, j_2)$ and a player $(j_1, i_2)$ prefer strategies $(\beta_{i_1}^{1}, \beta_{j_2}^{2})$ and $(\beta_{j_1}^{1}, \beta_{i_2}^{2})$, respectively.

\begin{theorem}	\label{TheoremPolynomial-timeComputabilityGridGraph}
	We suppose that a discrete preference game $\game = (G, \mathcal{M}, (\beta_{i})_{i \in V}, \alpha)$ on $k$-dimensional grid graph $G$ satisfies the above two conditions \ref{ConditionA} and \ref{ConditionB}. In this case, we can find a pure Nash equilibrium for $\game$ in polynomial time.
\end{theorem}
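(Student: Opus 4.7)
The plan is to decompose $\game$ into $k$ coordinate games by exploiting the product structure of both the metric~\ref{ConditionA} and the preferred strategies~\ref{ConditionB}. Because $\mathcal{M}$ is the $1$-product of $\mathcal{M}_{1}, \ldots, \mathcal{M}_{k}$, the distance function satisfies $d(x, y) = \sum_{t \in [k]} d_{t}(x^{t}, y^{t})$, and by~\ref{ConditionB} the preferred strategy of any player $i = (i_{1}, \ldots, i_{k})$ splits as $\beta_{i} = (\beta_{i_{1}}^{1}, \ldots, \beta_{i_{k}}^{k})$. Consequently, each cost function splits additively as $c_{i}(x) = \sum_{t \in [k]} c_{i}^{(t)}(x^{t})$, where $x^{t} = (x_{i}^{t})_{i \in V}$ and $c_{i}^{(t)}$ is the cost of player $i$ in the coordinate game $\game_{t} = (G, \mathcal{M}_{t}, (\beta_{i_{t}}^{t})_{i \in V}, \alpha)$ defined on the same grid graph $G$. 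This exhibits $\game$ as the cartesian game of $\game_{1}, \ldots, \game_{k}$, so by the lemma of Section~\ref{SecCartesianGame} it suffices to produce a pure Nash equilibrium of each $\game_{t}$ in polynomial time.

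To solve each coordinate game $\game_{t}$, I would exploit the fact that, by~\ref{ConditionB}, the preferred strategy $\beta_{i_{t}}^{t}$ of any player $i$ in $\game_{t}$ depends only on the coordinate $i_{t}$. This symmetry motivates searching for a \emph{hyperplane-symmetric} profile in which every player $i$ with $i_{t} = c$ plays the same strategy $z_{c} \in L_{t}$, for each $c \in [M_{t}]$. Under such a profile, any neighbor $j \in N(i)$ lying in the same hyperplane (i.e., differing from $i$ in some coordinate $s \neq t$) contributes $d_{t}(z_{c}, z_{c}) = 0$ to the cost, so only the at most two neighbors with $j_{t} = c \pm 1$ matter. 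The cost of any player with $i_{t} = c$ thereby reduces to $\alpha d_{t}(z_{c}, \beta_{c}^{t}) + (1 - \alpha)(\mathbf{1}[c > 1] \, d_{t}(z_{c}, z_{c-1}) + \mathbf{1}[c < M_{t}] \, d_{t}(z_{c}, z_{c+1}))$, which is precisely the cost structure of a discrete preference game on the path $1, 2, \ldots, M_{t}$ with metric $\mathcal{M}_{t}$ and preferred strategies $(\beta_{c}^{t})_{c \in [M_{t}]}$.

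I would then compute a PNE of this induced path game by dynamic programming along the path: for each $c \in [M_{t}]$ and $z \in L_{t}$, compute the minimum value of the exact potential restricted to vertices $1, \ldots, c$ subject to $z_{c} = z$, via a standard forward recurrence with back-pointers, in $O(M_{t} |L_{t}|^{2})$ time. The resulting tuple $(z_{c})_{c}$ globally minimizes the path potential and is therefore a PNE of the path game. To finish, I would verify that the lifted hyperplane-symmetric profile is a PNE of $\game_{t}$: any unilateral deviation to $z' \neq z_{c}$ by a player $i$ with $i_{t} = c$ introduces an extra non-negative term $(1 - \alpha) K d_{t}(z', z_{c})$ coming from its $K$ same-hyperplane neighbors, so the path-PNE inequality for $z_{c}$ at vertex $c$ immediately implies the corresponding PNE inequality in $\game_{t}$. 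The main obstacle I anticipate is the symmetrization step: the reduction to the path game works only because preferred strategies are hyperplane-constant and the grid's neighbor structure aligns with the coordinate directions, and articulating this carefully is the technical heart of the argument; the cartesian game decomposition is invoked as a black box from Section~\ref{SecCartesianGame}, and the path DP is routine.
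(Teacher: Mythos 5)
Your proof is correct, but it takes a route that differs from the paper's in both of its two steps, and it is worth noting where. The paper views the grid as the cartesian product of the path graphs $G_t = ([M_t], E_t)$, decomposes $\game$ into the $k$ factor games $\game_t = (G_t, \mathcal{M}_t, (\beta_{i_t}^{t})_{i_t \in [M_t]}, \alpha)$ --- games with only $M_t$ players, living on paths --- solves each by invoking the Daskalakis--Papadimitriou algorithm for bounded treewidth, and combines the factor equilibria via Theorem \ref{TheoremCartesianProductDPG}. You instead decompose per coordinate into subgames $\game_t = (G, \mathcal{M}_t, (\beta_{i_t}^{t})_{i \in V}, \alpha)$ on the \emph{full grid} with all $|V|$ players; this is really the product-metric decomposition of Section \ref{SecGameProductMetricSpace} (Theorems \ref{Theoremone-product-ExactPotentialGame} and \ref{TheoremSufficientCondition}), not the cartesian-game construction, so your claim that this ``exhibits $\game$ as the cartesian game of $\game_1, \dots, \game_k$'' and your appeal to ``the lemma of Section \ref{SecCartesianGame}'' are mislabeled: in Definition \ref{Def:CartesianProductofDPG} the factors live on the graphs $G_t$, not on $G$. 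The slip is harmless, because the pointwise additivity $c_i(x) = \sum_{t} c_i^{(t)}(x^{t})$ that you establish from conditions \ref{ConditionA} and \ref{ConditionB} holds for \emph{every} profile and makes the combination step immediate, without the $\delta$-bookkeeping of Theorem \ref{TheoremCartesianProductDPG}. Your second step is where the real divergence lies: the hyperplane symmetrization collapses each coordinate game on $G$ to exactly the paper's factor path game on $[M_t]$, and your lifting inequality --- a deviation by a player with $i_t = c$ incurs the extra non-negative term $(1-\alpha) K \, d_t(z', z_c)$ from same-hyperplane neighbors, so the path PNE inequality transfers --- is correct and amounts to a self-contained re-proof of the special case of Theorem \ref{TheoremCartesianProductDPG} that the theorem's proof handles abstractly via $\delta \ge 0$. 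Finally, your explicit $O(M_t |L_t|^2)$ dynamic program (a global minimizer of the exact path potential is a PNE) replaces the paper's treewidth black box; this buys a concrete, elementary runtime bound, while the paper's route is shorter given its Theorem \ref{TheoremCartesianProductDPG} and extends unchanged to cartesian products of arbitrary low-treewidth factor graphs rather than only paths.
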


To prove the above theorem, we introduce a cartesian product of discrete preference games, a game formed by some discrete preference games, in Section \ref{SecCartesianGame}.
We prove that a pure Nash equilibrium for a cartesian product of discrete preference games is efficiently constructible from pure Nash equilibria for ingredients of the original one.
After that, we give the proof of Theorem \ref{TheoremPolynomial-timeComputabilityGridGraph} in Section \ref{SecPolynomial-timeSolvabilityDPG}.

\subsection{Properties of Cartesian Products of Discrete Preference Games} \label{SecCartesianGame}
A cartesian product of discrete preference games is formed by $k$ discrete preference games with a parameter.
This model represents an environment where every player belongs to $k$ different communities and makes decisions within each community.
Here, we suppose that each community forms its own network. When we assume that every community forms the same network, such a game is a discrete preference game on a product metric space --- we discuss the complexity of such a model in Section \ref{SecGameProductMetricSpace}.

Informally speaking, a players' network on a cartesian product of discrete preference games is a cartesian product of graphs that are networks for the ingredients of the original one.
Each player is a tuple of players on ingredients, and they communicate along only one edge on an ingredient. Furthermore, a strategy space comprises a product metric space\footnote{This paper deals with a case of a $1$-product metric space, but it can also be generalized to any $\ell$-product metric space.}.

We define a cartesian product of graphs and a cartesian product of discrete preference games.

\begin{definition}[Cartesian Product of Graphs]
	Let $G_{1} = ( V_{1}, E_{1} ),$ $\dots, G_{k} = ( V_{k}, E_{k} )$ be simple graphs.
	We define the cartesian product of graphs $G = (V, E)$ as follows:
	Each node $v \in V$ is a $k$-tuple of nodes $(v_1, \dots, v_k)$, where $v_i \in V_i$ for each $i \in [k]$. There is an edge $\{ v , u \} \in E$ if and only if there exists only one $t \in [k]$ such that $\{ v_t, u_t \} \in E_{t}$ and $v_i = u_i$ for all $i \neq t$.
	We denote $G_1 \carpro G_2 \carpro \cdots \carpro G_k$ by the cartesian product of $k$ graphs $G_1, \dots, G_k$.
\end{definition}

\begin{definition} [Cartesian Product of Discrete Preference Game] \label{Def:CartesianProductofDPG}
	Fix a parameter $0 \le \alpha < 1$.
	Given $k$ discrete preference games with a parameter $\game_1 = (G_{1} = (V_1, E_1), \mathcal{M}_1 = (L_1, d_1),  (\beta_{i}^{1})_{i \in V_1}, \alpha),$ $\dots, \game_k = (G_{k} = (V_k, E_k), \mathcal{M}_k = (L_k, d_k), (\beta_{i}^{k})_{i \in V_{k}}, \alpha)$, we define the discrete preference game $\game = (G = (V, E), \mathcal{M} = (L, d), (\beta_{i})_{i \in V}, \alpha)$ as follows: the graph $G := G_1 \carpro G_2 \carpro \cdots \carpro G_k$, the strategy space $\mathcal{M}$ is a $1$-product metric space of $\mathcal{M}_{1}, \dots, \mathcal{M}_{k}$, and, for each player $i = (i_{1}, \dots, i_{k}) \in V$, the strategy profile $\beta_{i} = (\beta_{i_{1}}^{1}, \dots, \beta_{i_{k}}^{k})$. In this case, we call $\game$ the cartesian game constructed from discrete preference games $\game_{1}, \dots, \game_{k}$.
\end{definition}

Let $\game$ be the cartesian game constructed from discrete preference games $\game_{1}, \dots, \game_{k}$.
When we are given a strategy profile $x^{t}$ of $\game_{t}$ for each $t \in [k]$, we interpret $x = (x^{t})_{t \in [k]}$ as the strategy profile of $\game$ such that each player $i = (i_{1}, \dots, i_{k}) \in V$ plays the strategy $x_{i} = (x_{i_{1}}^{1}, \dots, x_{i_{k}}^{k})$.

The next theorem states that we can efficiently construct a pure Nash equilibrium for $\game$ from pure Nash equilibria for $\game_{1}, \dots, \game_{k}$.

\begin{theorem} \label{TheoremCartesianProductDPG}
	Suppose that $\game = (G, \mathcal{M}, (\beta_{i})_{i \in V}, \alpha)$ is a Cartesian game constructed from $k$ discrete preference games $\game_t = (G_{t}, \mathcal{M}_t,  (\beta_{i}^{t})_{i \in V_t}, \alpha)$ for $t \in [k]$.
	In this case, a strategy profile $\hat{x} = (\hat{x}^{t})_{t \in [k]}$ is a pure Nash equilibrium for $\game$, where $\hat{x}^t$ is arbitrary pure Nash equilibrium for $\game_t$.
\end{theorem}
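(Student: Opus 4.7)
The plan is to exploit the coordinate-wise structure of both the cartesian product graph and the $1$-product metric to show that the global cost $c_i$ in $\game$ decomposes exactly as a sum of the coordinate costs $c^{(t)}_{i_t}$ in the games $\game_t$. Once this decomposition is in hand, the pure Nash equilibrium property for $\hat x$ follows component-wise from the equilibrium property of each $\hat x^t$.

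First, I would unpack the two structural facts. By the definition of the cartesian product graph, every neighbor $j$ of $i = (i_1,\dots,i_k)$ in $G$ differs from $i$ in exactly one coordinate, say the $t$-th, with $\{i_t, j_t\} \in E_t$ and $j_s = i_s$ for all $s \neq t$. Hence $N(i)$ partitions as $\bigsqcup_{t=1}^{k} N_t(i)$, where $N_t(i)$ is indexed by the neighbors of $i_t$ in $G_t$. Since $\mathcal{M}$ is the $1$-product metric, $d(\hat x_i, \hat x_j) = \sum_{s=1}^{k} d_s(\hat x^s_{i_s}, (\hat x_j)^s)$ for any $j \in V$; when $j \in N_t(i)$ all coordinates except the $t$-th cancel, leaving $d(\hat x_i, \hat x_j) = d_t(\hat x^t_{i_t}, \hat x^t_{j_t})$. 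Similarly, $d(\hat x_i, \beta_i) = \sum_{t=1}^{k} d_t(\hat x^t_{i_t}, \beta^t_{i_t})$ by Definition~\ref{Def:CartesianProductofDPG}.

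Next I would plug these two identities into the cost formula and swap the order of summation to obtain
\begin{align*}
c_i(\hat x) &= \alpha \sum_{t=1}^{k} d_t(\hat x^t_{i_t}, \beta^t_{i_t}) + (1-\alpha) \sum_{t=1}^{k} \sum_{j_t \in N_{G_t}(i_t)} d_t(\hat x^t_{i_t}, \hat x^t_{j_t}) \\
&= \sum_{t=1}^{k} c^{(t)}_{i_t}(\hat x^t),
\end{align*}
where $c^{(t)}_{i_t}$ denotes player $i_t$'s cost function in game $\game_t$. The exact same decomposition, computed for an arbitrary deviation $y_i = (y^1, \dots, y^k) \in L$ with the other players still playing according to $\hat x$, yields $c_i(y_i, \hat x_{-i}) = \sum_{t=1}^{k} c^{(t)}_{i_t}(y^t, \hat x^t_{-i_t})$, since the neighbors of $i$ still lie in the same coordinate slices and only the $t$-th coordinate of $y_i$ contributes to the $t$-th summand.

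Finally, since $\hat x^t$ is a pure Nash equilibrium of $\game_t$, one has $c^{(t)}_{i_t}(\hat x^t) \le c^{(t)}_{i_t}(y^t, \hat x^t_{-i_t})$ for every $t \in [k]$ and every $y^t \in L_t$. Summing these $k$ inequalities gives $c_i(\hat x) \le c_i(y_i, \hat x_{-i})$ for every deviation $y_i \in L$ and every player $i \in V$, which is precisely the pure Nash equilibrium condition. I do not expect a serious obstacle here; the only subtlety to verify carefully is that a neighbor $j \in N_t(i)$ of $i$ in the cartesian product plays, under $\hat x$, a strategy whose $s$-th coordinate for $s \neq t$ equals $\hat x^s_{i_s}$, so that distances across non-$t$ coordinates vanish. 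This is exactly what guarantees that deviations in the $t$-th coordinate affect only the $t$-th summand of $c_i$, which is what makes the coordinate-wise equilibrium argument go through.
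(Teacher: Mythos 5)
Your decomposition $c_i(\hat x) = \sum_{t} c^{(t)}_{i_t}(\hat x^t)$ for the consistent profile $\hat x$ is correct and matches the paper's computation, but the transferred claim $c_i(y_i, \hat x_{-i}) = \sum_{t} c^{(t)}_{i_t}(y^t, \hat x^t_{-i_t})$ is false, and the justification you give for it (``the neighbors of $i$ still lie in the same coordinate slices and only the $t$-th coordinate of $y_i$ contributes to the $t$-th summand'') is exactly where the argument breaks. The cancellation of cross-coordinate distances relies on \emph{both} endpoints of an edge playing consistent strategies: for $j \in N_t(i)$ one has $\hat x^s_{i_s} = \hat x^s_{j_s}$ for all $s \neq t$, so $d(\hat x_i, \hat x_j)$ collapses to $d_t(\hat x^t_{i_t}, \hat x^t_{j_t})$. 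After the deviation only the neighbor's strategy remains consistent; the deviator plays $y^s$ in coordinate $s$, which need not equal $\hat x^s_{i_s}$. Concretely, if $y_i$ differs from $\hat x_i$ in coordinate $t$, then every neighbor $j \in N_s(i)$ with $s \neq t$ contributes the extra term $d_t(y^t, \hat x^t_{j_t}) = d_t(y^t, \hat x^t_{i_t}) > 0$, so already a single-coordinate deviation violates your claimed equality whenever $i$ has neighbors in other slices. The correct statement is
\begin{align*}
c_i(y_i, \hat x_{-i}) \;=\; \sum_{t \in [k]} c^{(t)}_{i_t}\big(y^t, \hat x^t_{-i_t}\big) + \delta,
\qquad
\delta \;=\; (1-\alpha)\sum_{s \in [k]}\,\sum_{j \in N_s(i)}\,\sum_{t \neq s} d_t\big(y^t, \hat x^t_{j_t}\big) \;\ge\; 0,
\end{align*}
and this is precisely how the paper proceeds: it isolates this non-negative remainder $\delta$ (written there with $N(i \mid s)$ in place of your $N_s(i)$) and derives the contradiction $0 \le \delta < \sum_{t}\big(c^{(t)}_{i_t}(\hat x^t_{i_t}, \hat x^t_{-i_t}) - c^{(t)}_{i_t}(y^t_{i_t}, \hat x^t_{-i_t})\big) \le 0$.

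The saving grace is that your error is one-sided: the terms you lose only make the deviation more expensive in $\game$, so replacing your equality by the inequality $c_i(y_i, \hat x_{-i}) \ge \sum_{t} c^{(t)}_{i_t}(y^t, \hat x^t_{-i_t})$ repairs the proof with no other changes, via $c_i(\hat x) = \sum_t c^{(t)}_{i_t}(\hat x^t) \le \sum_t c^{(t)}_{i_t}(y^t, \hat x^t_{-i_t}) \le c_i(y_i, \hat x_{-i})$. But as written the proof asserts a false identity, and the ``subtlety'' you flag at the end is verified only for $\hat x$ itself, not for the deviated profile, where it genuinely fails.
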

\begin{proof}
For each $t \in [k]$, we denote by $\mathcal{M}_t = (L_t, d_t)$ the $t$-th finite metric space.
Note that each player $i = (i_1, \dots, i_k) \in V$  plays a $k$-tuple $(x_{i_{1}}^{1}, \dots, x_{i_k}^{k}) \in L_1 \times \cdots \times L_k$ as her strategy.
For a strategy profile $x = (x_{i})_{i \in V}$ of $\game$, the cost $c_i$ for a player $i = (i_1, \dots, i_k) \in V$ is 
\begin{align*}
	c_{i}(x) = \alpha d(x_{i}, \beta_{i}) + ( 1 - \alpha ) \sum_{j \in N(i)} d(x_{i}, x_{j}).
\end{align*}

For each $t \in [k]$ and each player $i_{t} \in V_{t}$, we denote by $c_{i_{t}}^{t}$ the cost function for $i_t$ on $\game_{t}$. Note that for a strategy profile $x^{t} = (x_{i_{t}}^{t})_{i_{t} \in V_{t}}$, the cost for a player $i_{t} \in V_{t}$ is 
\begin{align*}
	c_{i_t}^{t}(x^{t}) = \alpha d_{t}(x_{i_{t}}^{t}, \beta_{i_{t}}^{t}) + ( 1 - \alpha) \sum_{j_{t} \in N^{t}(i_{t})} d_{t}(x_{i_{t}}^{t}, x_{j_{t}}^{t}),
\end{align*}
where $N^{t}(i_t)$ is the set of neighbors of the player $i_t$ on the graph $G_{t}$. 

As mentioned above, given a strategy profile $x^{t} = (x_{i_t}^{t})_{i_t \in V_{t}}$ of the game $\game_{t}$ for each  $t \in [k]$, we interpret the tuple $x = (x^{t})_{t \in [k]}$ as the strategy profile of $\game$ such that each player $i = (i_1, \dots, i_k) \in V$ plays the strategy $x_{i} = (x_{i_1}^{1}, \dots, x_{i_k}^{k})$.
In this case, the cost for a player $i = (i_1, \dots, i_k)$ holds that 

\begin{align*}
	c_{i}&(x) 
		= \alpha d(x_{i}, \beta_{i}) + ( 1 - \alpha ) \sum_{j \in N(i)} d(x_{i}, x_{j})\\
		&= \sum_{t \in [k]} \alpha d_{t}(x_{i_{t}}^{t}, \beta_{i_{t}}^{t}) + (1-\alpha) \sum_{s \in [k]} \sum_{j \in N(i \mid s)} \sum_{t \in [k]} d_{t}(x_{i_{t}}^{t}, x_{j_{t}}^{t})\\
		&= \sum_{t \in [k]} \alpha d_{t}(x_{i_{t}}^{t}, \beta_{i_{t}}^{t})\\&~~~~~~~ + (1-\alpha) \sum_{s \in [k]} \sum_{j \in N(i \mid s)} \left( d_{s}(x_{i_{s}}^{s}, x_{j_{s}}^{s}) + \sum_{t \neq s} d_{t}(x_{i_{t}}^{t}, x_{j_{t}}^{t}) \right)\\
		&= \sum_{t \in [k]} \alpha d_{t}(x_{i_{t}}^{t}, \beta_{i_{t}}^{t}) + (1 -\alpha) \sum_{s \in [k]} \sum_{j \in N(i \mid s)} d_{s}(x_{i_{s}}^{s}, x_{j_{s}}^{s})\\
		&= \sum_{t \in [k]} \left( \alpha d_{t}(x_{i_{t}}^{t}, \beta_{i_{t}}^{t}) + (1 - \alpha) \sum_{j_{t} \in N^{t}(i_{t})} d_{t}(x_{i_{t}}^{t}, x_{j_{t}}^{t}) \right)\\
		&= \sum_{t \in [k]} c_{i_{t}}^{t}(x^t),
\end{align*}

\noindent
where $N(i \mid t) = \{ j \in N(i) ~;~ j_{t} \neq i_{t} \}$, which is the subset of the neighbors of $i$ on $G$ that are adjacent to $i$ by an edge on $G_{t}$.
The fourth equality follows from the construction of the strategy profile $x = (x^{t})_{t \in [k]}$.
To show the fifth equality, we use the fact that $\sum_{j \in N( i \mid t)}d_{t}(x_{i_{t}}^{t}, x_{j_{t}}^{t}) = \sum_{j \in N^{t}( i_{t})}d_{t}(x_{i_{t}}^{t}, x_{j_{t}}^{t})$.

Here, we prove that if a strategy profile $\hat{x}^{t}$ is a pure Nash equilibrium for $\game_{t}$ for each $t \in [k]$, then the strategy profile $\hat{x} = (\hat{x}^{t})_{t \in [k]}$ is a pure Nash equilibrium for $\game$.

For the sake of a contradiction, we assume that some player $i = (i_{1}, \dots, i_{k}) \in V$ can improve her cost by moving her strategy to $y_{i} = (y_{i_1}^{1}, \dots, y_{i_k}^{k})$, i.e., it satisfies that $c_{i}(\hat{x}_{i}, \hat{x}_{-i}) > c_{i}(y_{i}, \hat{x}_{-i})$.
Then we have
\begin{align*}
	0 &< c_{i}(\hat{x}_{i}, \hat{x}_{-i_{t}}) - c_{i}(y_{i}, \hat{x}_{-i_{t}})\\
		&= \sum_{t \in [k]} c_{i_{t}}^{t}(\hat{x}_{i_{t}}^{t}, \hat{x}_{ - i_{t}}^{t})\\&~~~~~~~  - \bigg( \sum_{t \in [k]} \alpha d_{t}(y_{i_{t}}^{t}, \beta_{i_{t}}^{t}) + (1-\alpha) \sum_{s \in [k]} \sum_{j \in N(i \mid s)} \sum_{t \in [k]} d_{t}(y_{i_{t}}^{t}, \hat{x}_{j_{t}}^{t}) \bigg)\\
		&= \sum_{t \in [k]} c_{i_{t}}^{t}(\hat{x}_{i_{t}}^{t}, \hat{x}_{ - i_{t}}^{t})\\&~~~~~~~ - \bigg( \sum_{t \in [k]} \alpha d_{t}(y_{i_{t}}^{t}, \beta_{i_{t}}^{t})+ (1-\alpha) \sum_{s \in [k]} \sum_{j \in N^{s}(i_{s})}d_{s}(y_{i_s}^{s}, \hat{x}_{j_{s}}^{s}) \bigg) - \delta\\
		&= \sum_{t \in [k]} c_{i_{t}}^{t}(\hat{x}_{i_{t}}^{t}, \hat{x}_{ - i_{t}}^{t}) - \sum_{t \in [k]} c_{i_{t}}^{t}(y_{i_t}^{t}, \hat{x}_{-i_t}^{t}) - \delta\\
		&= \sum_{t \in [k]} \left( c_{i_{t}}^{t}(\hat{x}_{i_{t}}^{t}, \hat{x}_{-i_{t}}^{t}) - c_{i_{t}}^{t}(y_{i_{t}}^{t}, \hat{x}_{-i_{t}}^{t}) \right) - \delta,
\end{align*}
where $\delta = (1-\alpha) \sum_{s \in [k]} \sum_{j \in N(i \mid s)} \sum_{t \neq s} d_{t}(y_{i_{t}}^{t}, \hat{x}_{j_{t}}^{t})$. Note that $\delta$ is non-negative.

Recall that for every $t \in [k]$, the strategy profile $\hat{x}^{t}$ is a pure Nash equilibrium for $\game_t$. Thus, it holds that $c_{i_{t}}^{t}(\hat{x}_{i_t}^{t}, \hat{x}_{-i_{t}}^{t}) \le c_{i_{t}}(y_{i_{t}}^{t}, \hat{x}_{-i_{t}}^{t})$. Therefore, we have
\begin{align*}
	0 \le \delta < \sum_{t \in [k]} \left( c_{i_{t}}^{t}(\hat{x}_{i_{t}}^{t}, \hat{x}_{-i_{t}}^{t}) - c_{i_{t}}^{t}(y_{i_{t}}^{t}, \hat{x}_{-i_{t}}^{t}) \right) \le 0,
\end{align*}
 which is a contradiction.
\end{proof}

\subsection{Polynomial-time Solvability of Discrete Preference Games} \label{SecPolynomial-timeSolvabilityDPG}
In this section, we prove Theorem \ref{TheoremPolynomial-timeComputabilityGridGraph}.
From the condition \ref{ConditionA}, the strategy space $\mathcal{M}$ on $\game$ is a $1$-product metric space of $k$ finite metric spaces $\mathcal{M}_{1}, \dots, \mathcal{M}_{k}$. 
From the condition \ref{ConditionB}, we can select a strategy $\beta_{i_{t}}^{t} \in L_{t}$ for each $t \in [k]$ and each $i_{t} \in [M_{t}]$ so that the set $\{ \beta_{i_{t}}^{t} \in L_{t} ~;~ t \in [k], i_{t} \in [M_{t}] \}$ such that for each player $i = (i_1, \dots, i_{k}) \in V$, the preferred strategy $\beta_{i}$ is equal to $(\beta_{i_{1}}^{1}, \dots, \beta_{i_{k}}^{k})$.

Note that a $k$-dimensional grid graph $G = ([M_1] \times \cdots \times [M_k], E)$ is a cartesian product of graphs $G_{1} = ( [M_{1}], E_{1}), \dots,$ $G_{k} = ([M_{k}], E_{k})$, where there is an edge $\{ i, j\} \in E_{t}$ if $| i - j | = 1$ for each $t \in [k]$. We use this fact to prove this theorem.

We decompose $\game$ into $k$ discrete preference games $\game_{1}, \dots, \game_{k}$ such that $\game$ is to be a Cartesian game constructed from these subgames.
For each $t \in [k]$, we defne the $t$-th subgame as $\game_{t} = (G_{t}, \mathcal{M}_{t}, (\beta_{i_t}^{t})_{i_t \in [N_t]}, \alpha)$.
It is easy to see that $\game$ is a product game constructed from $k$ discrete preference games $\game_{1}, \dots, \game_{k}$.

Note that for every $t \in [k]$, the players' network on $\game_{t}$ has $O(\log n)$-treewidth.
Therefore, we can find a pure Nash equilibrium $\hat{x}^{t}$ for $\game_{t}$ in polynomial time from the result by Daskalakis and Papadimitriou \cite{DP06}.
By Theorem \ref{TheoremCartesianProductDPG}, the strategy profile $\hat{x} = (\hat{x}^{t})_{t \in [k]}$ constructed from pure Nash equilibria $\hat{x}^{1}, \dots, \hat{x}^{k}$ is a pure Nash equilibrium for $\game$. Therefore, we can compute a pure Nash equilibrium for $\game$ in polynomial time. 

\subsection{Properties of Discrete Preference Games on Product Metric Spaces} \label{SecGameProductMetricSpace}
In the rest of this section, we focus on a discrete preference game on a product metric space. Recall that Lolakapuri et al. \cite{LBNPD19} considered a $1$-product metric space of some path metric spaces and have proven that the problem of finding a pure Nash equilibrium for a discrete preference game on such a metric space is polynomial-time computable.
Their algorithm, called {\it Product Metric Algo} produced in \cite{LBNPD19}, gives us an approach to computing pure Nash equilibria for games: It may be easier to compute it when we can decompose the strategy space into an $\ell$-product metric space for some $\ell \in \positiveint \cup \{ \infty \}$.

This section discusses the conditions under which such an approach, a decomposition approach, would work well. We prove that the decomposition approach always works for a discrete preference game on a $1$-product metric space of arbitrary finite metric spaces.

Before discussing, we describe the more general model of discrete preference games, introduced by Lolakapuri et al.\ \cite{LBNPD19}.
In their model, a game has edge weights and a penalty for each strategy instead of a parameter.
A {\it discrete preference game with penalties} $\game = (G, \mathcal{M}, ( p_{i}(s) ) )$ is defined by: (i) an edge-weighted graph $G = (V, E, (w_e)_{e \in E})$; (ii) each player $i \in V$ has a penalty $p_{i}(s) \in \nonnegreal$ for each strategy $s \in L$, where $L$ is a finite set of strategies; (iii) given a strategy profile $x = (x_i)_{ i \in V}$, the cost for player $i \in V$ is:
\begin{align}
	c_i(x) = \sum_{s \in L}	p_i(s) d(x_i, s) + \sum_{ j \in N(i)} w_{ij} d(x_i, x_j).
\end{align}

Let $( G = (V, E, (w_e)_{e \in E}), \mathcal{M} = (L, d), (p_{i}(s))_{i \in V, s \in L})$ be a discrete preference game.
For this game, we define the function $\Phi \colon L^V \to \nonnegreal$ as follows: 
\begin{align} \label{Eq:ExactPotential-DPS}
	\Phi(x) = \sum_{ i \in V} \sum_{ s \in L} p_i(s) d(s, x_i) + \sum_{ \{ i, j \} \in E} w_{ij} d(x_i, x_j).
\end{align}
We show that $\Phi$ is an exact potential function for a discrete preference game with penalties. 

\begin{lemma} \label{LemmaDPG-is-EPG}
Let $\game = \left( G, \mathcal{M}, ( p_{i}(s) ) \right)$ be a discrete preference game with penalties, where $G = (V, E, (w_e)_{e \in E})$,  $\mathcal{M} = (L, d)$. The game $\game$ is an exact potential game.
\end{lemma}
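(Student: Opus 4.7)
The plan is to verify directly that the function $\Phi$ defined in \eqref{Eq:ExactPotential-DPS} is an exact potential function for $\game$, by computing the difference $\Phi(s_i, x_{-i}) - \Phi(t_i, x_{-i})$ and matching it term-by-term with $c_i(s_i, x_{-i}) - c_i(t_i, x_{-i})$.

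First, I fix an arbitrary player $i \in V$, two strategies $s_i, t_i \in L$, and a profile $x_{-i} \in L^{V \setminus \{i\}}$ for the remaining players. I then inspect the two sums that make up $\Phi$. In the penalty term $\sum_{k \in V}\sum_{s \in L} p_k(s) d(s, x_k)$, only the inner sum indexed by $k = i$ depends on player $i$'s strategy; all summands with $k \neq i$ cancel in the difference. Similarly, in the edge term $\sum_{\{k, j\} \in E} w_{kj} d(x_k, x_j)$, only edges incident to $i$ depend on $x_i$, so the contribution surviving in the difference is $\sum_{j \in N(i)} w_{ij} d(x_i, x_j)$, where I rely on the symmetry $d(x_i, x_j) = d(x_j, x_i)$ to write each incident edge once with $i$ in the first slot.

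Combining these two observations yields
\begin{align*}
\Phi(s_i, x_{-i}) - \Phi(t_i, x_{-i})
&= \sum_{s \in L} p_i(s)\bigl( d(s, s_i) - d(s, t_i) \bigr) \\
&\quad + \sum_{j \in N(i)} w_{ij} \bigl( d(s_i, x_j) - d(t_i, x_j) \bigr).
\end{align*}
Using symmetry of the metric once more to rewrite $d(s, s_i) = d(s_i, s)$ and $d(s, t_i) = d(t_i, s)$, the right-hand side is exactly $c_i(s_i, x_{-i}) - c_i(t_i, x_{-i})$ by the definition of the cost function of a discrete preference game with penalties. This establishes the exact potential property and completes the proof.

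There is no real obstacle here; the argument is a routine bookkeeping calculation. The only point that needs care is making sure the two independent summations in $\Phi$ (penalties over $L$ and edge costs over $E$) are decomposed correctly so that exactly the summands depending on $x_i$ survive the subtraction, and using symmetry of $d$ to align the orientation of each distance with the corresponding term in $c_i$.
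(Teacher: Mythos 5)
Your proposal is correct and matches the paper's own proof essentially verbatim: both compute $\Phi(s_i, x_{-i}) - \Phi(t_i, x_{-i})$, observe that all summands not depending on player $i$'s strategy cancel so that only the penalty term for $i$ and the edges incident to $i$ survive, and identify the result with $c_i(s_i, x_{-i}) - c_i(t_i, x_{-i})$. Your explicit remarks about the symmetry of $d$ are a minor bookkeeping detail the paper leaves implicit; there is no substantive difference.
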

\begin{proof}
To see why the function $\Phi$ defined as Eq.\ (\ref{Eq:ExactPotential-DPS}) is an exact potential function for $\game$, for each player $i \in V$, all two strategies $x_i$ and $y_i$, and all strategies $x_{-i}$ of all players expect $i$, it holds that 
\begin{align*}
		\Phi(x_i, x_{-i}) &- \Phi(y_i, x_{-i}) \\
		&= \sum_{ s \in L} p_i(s) d(s, x_i) + \sum_{ j \in N(i)} w_{ij} d(x_i, x_j)\\ &~~~~~~~  - \left(\sum_{ s \in L} p_i(s) d(s, y_i) + \sum_{ j \in N(i)} w_{ij} d(y_i, x_j) \right)\\
		&= c_i(x_i, x_{-i}) - c_i(y_i, x_{-i}).
\end{align*}
Thus, a discrete preference game with penalties is an exact potential game.	
\end{proof}

Now, we consider a discrete preference game on an $\ell$-product metric space.
Let $\game = (G = (V, E, (w_e)_{e \in E}), \mathcal{M} = (L, d), ( p_{i}(s) )_{i \in V, s \in L})$ be a discrete preference game with penalties, and let $\ell \in \positiveint \cup \{ \infty \}$.
Suppose that $\mathcal{M}$ is an $\ell$-product metric space formed by $k$ finite metric spaces $\mathcal{M}_1 = (L_1, d_1), \dots, \mathcal{M}_k = (L_k, d_k)$.
For a strategy profile $x = (x_{i})_{i \in V}$ on $\game$, we interpret it as that each player $i \in V$ plays $k$-tuple $x_{i} = (x_{i}^{1}, \dots, x_{i}^{k}) \in L_{1} \times \cdots \times L_{k}$. We denote by $x^{t} = (x_{i}^{t})_{i \in V}$ the list on  $L_{t}$ for each strategy profile $x$ on $\game$.
For a strategy profile $x = (x^t)_{t \in [k]}$ on $\game$, we interpret the strategy $x_{i}$ of player $i \in V$ as the $k$-tuple of strategies $x_{i} = (x_{i}^1, \dots, x_{i}^k)$, where $x^t$ is a strategy profile on $L_t$ for each $t \in [k]$.

We decompose $\game$ into $k$ discrete preference games on the partial metric spaces. In the following, we refer to such games as subgames.
For each $t \in [k]$, the $t$-th subgame $\game_{t}$ of $\game$ is defined as $\game_{t} := ( G, \mathcal{M}_t, (q_{i}^{t}(s^{t}))_{i \in V, s^t \in L_t})$, where $q_{i}^{t}(s^{t}) = \sum_{u \in L \colon u^t = s^t}p_i(u)$.
Then the cost $c_{i}^{t}(x^t)$ of a player $i \in V$ on the $t$-th subgame is

\begin{align}
	c_{i}^{t}(x^t) = \sum_{ s^t \in L_t} q_{i}^{t}(s^{t}) d_{t}(x_{i}^{t}, s^{t}) + \sum_{ j \in N(i)} w_{ij} d_{t}(x_{i}^{t}, x_{j}^{t}). 
\end{align}
We denote by $\Phi^{(t)}(x^{t})$ the exact potential function for the $t$-th subgame, i.e.,
\begin{align*}
	\Phi^{(t)}(x^{t}) = \sum_{i \in V}\sum_{ s^t \in L_t} q_{i}^{t}(s^{t}) d_{t}(x_{i}^{t}, s^{t}) + \sum_{ \{i, j\} \in E} w_{ij} d_{t}(x_{i}^{t}, x_{j}^{t}).
\end{align*}

Furthermore, we define a function $\Psi(x)$ as
\begin{align}
	\Psi(x) = \sum_{t \in [k]} \Phi^{(t)}(x^{t}).
	\label{Eq:GeneralizedOrdinalPotentialFunction}
\end{align}

\begin{theorem} \label{TheoremSufficientCondition}
	If the function $\Psi$ defined in Eq.\ (\ref{Eq:GeneralizedOrdinalPotentialFunction}) is a generalized ordinal potential function for $\game$, then a strategy profile $\hat{x} = (\hat{x}^{t})_{t \in [k]}$ is a pure Nash equilibrium for $\game$, where $\hat{x}^t$ is an arbitrary pure Nash equilibrium for the $t$-th subgame.
\end{theorem}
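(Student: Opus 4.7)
The plan is to exploit the hypothesis that $\Psi$ is a generalized ordinal potential function for $\game$, which reduces the problem to showing that $\hat{x} = (\hat{x}^{t})_{t \in [k]}$ is a local minimum of $\Psi$ under unilateral deviations. Concretely, by the contrapositive of the definition of a generalized ordinal potential, if for every player $i \in V$ and every alternative strategy $y_{i} \in L$ we have $\Psi(\hat{x}) \le \Psi(y_{i}, \hat{x}_{-i})$, then $c_{i}(\hat{x}) \le c_{i}(y_{i}, \hat{x}_{-i})$, which is exactly the pure Nash equilibrium condition. So the entire task becomes bounding $\Psi$ from below at $\hat{x}$.

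Next, I would decompose any candidate deviation coordinate-wise. Write $\hat{x}_{i} = (\hat{x}_{i}^{1}, \dots, \hat{x}_{i}^{k})$ and $y_{i} = (y_{i}^{1}, \dots, y_{i}^{k})$. Since only player $i$ changes her strategy, and her strategy in $\game$ is literally the tuple of her strategies in the subgames, we obtain the telescoping identity
\begin{align*}
\Psi(y_{i}, \hat{x}_{-i}) - \Psi(\hat{x}) = \sum_{t \in [k]} \bigl( \Phi^{(t)}(y_{i}^{t}, \hat{x}_{-i}^{t}) - \Phi^{(t)}(\hat{x}^{t}) \bigr),
\end{align*}
because for every $s \ne t$ the $s$-coordinate of the profile is unchanged, so $\Phi^{(s)}$ is unaffected. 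Then I would invoke Lemma~\ref{LemmaDPG-is-EPG}, which says that $\Phi^{(t)}$ is an exact potential function for the $t$-th subgame $\game_{t}$; thus each summand equals $c_{i}^{t}(y_{i}^{t}, \hat{x}_{-i}^{t}) - c_{i}^{t}(\hat{x}^{t})$. Since $\hat{x}^{t}$ is by assumption a pure Nash equilibrium of $\game_{t}$, every such difference is non-negative, so summing yields $\Psi(y_{i}, \hat{x}_{-i}) \ge \Psi(\hat{x})$, which is what was needed.

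The only delicate point, and the reason the statement needs the separate hypothesis on $\Psi$, is that $\Psi$ is assumed to be merely a generalized ordinal potential, not an exact potential, so the implication $\Psi(\hat{x}) \le \Psi(y_{i}, \hat{x}_{-i}) \Rightarrow c_{i}(\hat{x}) \le c_{i}(y_{i}, \hat{x}_{-i})$ runs in only one direction. Fortunately, that is exactly the direction we need, and the converse is never invoked; all the non-trivial equality/monotonicity work happens inside the subgames, where exact potentials are available via Lemma~\ref{LemmaDPG-is-EPG}. I do not foresee any significant obstacle beyond carefully separating the roles of the exact potentials $\Phi^{(t)}$ of the subgames from the generalized ordinal potential $\Psi$ of the full game.
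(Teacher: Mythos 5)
Your proposal is correct and takes essentially the same route as the paper: the paper argues by contradiction (a profitable deviation in $\game$ forces $\Psi$ to strictly decrease by the generalized ordinal potential property, yet the decomposition $\Psi(\hat{x}) - \Psi(y_i,\hat{x}_{-i}) = \sum_{t\in[k]}\bigl(c_i^t(\hat{x}_i^t,\hat{x}_{-i}^t)-c_i^t(y_i^t,\hat{x}_{-i}^t)\bigr)$, valid because each $\Phi^{(t)}$ is an exact potential for the $t$-th subgame, makes that quantity nonpositive at a profile of subgame equilibria), which is precisely the contrapositive of your direct argument. One cosmetic remark: your stated justification for the decomposition identity (``for every $s \ne t$ the $s$-coordinate of the profile is unchanged'') misdescribes the situation when $y_i$ differs from $\hat{x}_i$ in several coordinates; the identity needs no such restriction, being immediate from the definition $\Psi(x)=\sum_{t\in[k]}\Phi^{(t)}(x^{t})$ together with the fact that $\Phi^{(t)}$ depends only on the $t$-th coordinates of the strategies.
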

\begin{proof}
	For the sake of a contradiction, we assume that $\hat{x}$ is not a pure Nash equilibrium for $\game$, and thus, there is a player $i$ that can improve her cost by moving to another strategy $y_i$ from $\hat{x}_i$.
	Then it holds that $c_i(\hat{x}_i, \hat{x}_{-i}) > c_{i}(y_i, \hat{x}_{-i})$.
	Since $\Psi$ is a generalized ordinal potential function for $\game$, it satisfies that
	\begin{align*}
		0 < \Psi(\hat{x}_i, \hat{x}_{-i}) - \Psi(y_i, \hat{x}_{-i})
			= \sum_{t \in [k]} \big( c_{i}^{t}(\hat{x}_i^t, \hat{x}_{ - i}^t) - c_{i}^{t}(y_i^t, \hat{x}_{ - i}^t) \big).
	\end{align*}
	This implies that there is at least one $t \in [k]$ such that  $c_{i}^{t}(\hat{x}_i^t, \hat{x}_{ - i}^t) > c_{i}^{t}(y_i^t, \hat{x}_{ - i}^t)$. Note that for each $t \in [k]$, $\hat{x}^t$ is a pure Nash equilibrium for the $t$-th subgame, and hence, we have  $c_{i}^{t}(\hat{x}_i^t, \hat{x}_{ - i}^t) \le c_{i}^{t}(y_i^t, \hat{x}_{ - i}^t)$. This is a contradiction.
\end{proof}

\begin{corollary}
	There is a polynomial-time algorithm to find a pure Nash equilibrium for $\game$ when the following two conditions hold: (i) for each $t$-th subgame, we have a polynomial-time algorithm to find a pure Nash equilibrium; and (ii) the function $\Psi$ is a generalized ordinal potential function for $\game$.
\end{corollary}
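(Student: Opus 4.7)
The plan is to give a direct algorithm whose correctness follows almost immediately from Theorem \ref{TheoremSufficientCondition}. Given the input discrete preference game $\game = (G, \mathcal{M}, (p_i(s)))$ with $\mathcal{M}$ an $\ell$-product metric space of $\mathcal{M}_1, \dots, \mathcal{M}_k$, I first construct the $k$ subgames $\game_1, \dots, \game_k$ explicitly as in Section \ref{SecGameProductMetricSpace}: each $\game_t$ inherits the graph $G$ and its edge weights from $\game$, uses $\mathcal{M}_t$ as its strategy space, and assigns to every player $i$ the penalty $q_i^t(s^t) = \sum_{u \in L \colon u^t = s^t} p_i(u)$ for each $s^t \in L_t$. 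This preprocessing is polynomial in the size of $\game$ since each $q_i^t(s^t)$ requires at most $|L|$ additions.

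Next, I invoke condition (i) once for each $t \in [k]$: this yields a pure Nash equilibrium $\hat{x}^t$ of $\game_t$ in polynomial time. Since the number of factors $k$ is bounded by the input description of $\mathcal{M}$, the total work for this step is also polynomial. I then assemble the profile $\hat{x} = (\hat{x}^t)_{t \in [k]}$ of $\game$ by letting every player $i \in V$ play the $k$-tuple $(\hat{x}_i^1, \dots, \hat{x}_i^k) \in L_1 \times \cdots \times L_k$; this assembly is a linear-time pass over the players.

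To finish, I apply Theorem \ref{TheoremSufficientCondition} directly. By condition (ii), $\Psi$ is a generalized ordinal potential function for $\game$, so the hypothesis of the theorem is satisfied and its conclusion gives that $\hat{x}$ is a pure Nash equilibrium for $\game$. Combining the three polynomial-time steps above, the entire procedure runs in polynomial time.

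There is no real obstacle: the corollary is an algorithmic reformulation of Theorem \ref{TheoremSufficientCondition}, and the only substantive check is that the decomposition into subgames and the assembly of $\hat{x}$ are indeed polynomial. The mild subtlety is to observe that computing the penalties $q_i^t(s^t)$ does not require traversing every strategy of $\mathcal{M}$ for every $(i, t, s^t)$ triple more than once, which is straightforward since $L$ is given explicitly as part of the input.
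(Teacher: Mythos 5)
Your proof is correct and is precisely the argument the paper intends: the corollary is stated as an immediate consequence of Theorem \ref{TheoremSufficientCondition}, obtained by constructing the subgames, solving each via condition (i), assembling $\hat{x} = (\hat{x}^{t})_{t \in [k]}$, and invoking the theorem under condition (ii). Your additional bookkeeping --- that computing the penalties $q_i^t(s^t)$ and assembling the profile are polynomial because $L$ and the penalties $p_i(s)$ are given explicitly in the input --- is exactly the routine verification the paper leaves implicit.
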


Unfortunately, the function $\Psi$ is not always a generalized ordinal potential function for $\game$.
Consider a discrete preference game on a discrete metric space with $2^k$ points. Note that such a metric can be written straightforwardly as an $\infty$-product metric of $k$ discrete metrics. Example \ref{Example} shows that the metric decomposition approach is not easily applicable in such a game.

\begin{example} \label{Example}
	For simplicity, we consider a discrete preference game with a parameter.
	Let $G = (V, E)$  be an unweighted graph, and $\mathcal{M}$ be a discrete metric space on $2^k$ strategies. For each player $v \in V$, we denote by $\beta_v \in [2^k]$ the preferred strategy of player $v$. Furthermore, we are given a parameter $1/2 < \alpha < 1$.
	
	We decompose $\mathcal{M}$ into $k$ metric spaces $( \{0, 1\}, \delta)$. Each point $x \in [2^k]$ is interpreted as the binary string, and hence, the point on the $t$-th metric is the the $t$-th bit for $x$.
	Here, the function $\delta$ is also the discrete metric, i.e., $\delta(x, y) = 1$ if $x \neq y$, otherwise $\delta(x, y) = 0$.
	It is easy to see that for each pair of points $x, y \in L$, it satisfies that $d(x, y) = \max_{t \in [k]} \delta(x^t, y^t)$, where $x^t$ is the $t$-th bit of $x$.
	The cost for player $i \in V$ on the $t$-th subgame is $c_{i}^{t}(x^t) = \alpha \delta(\beta_{i}^{t}, x_{i}^{t}) + (1-\alpha) \sum_{ j \in N(i)} \delta( x_{i}^{t}, x_{j}^{t} )$.
	
	Now, we show that this game does not satisfy the condition of Theorem \ref{TheoremSufficientCondition}.
	We fix any player $i \in V$. Then, we take strategies $x_i$ and $y_i$ for $i$ and strategies $x_{ - i }$ for all others except $i$ such that it satisfies the following conditions:
	\begin{itemize}
		\item $x_i$ and $y_i$ are different at only the $t$-th bit for some $t \in [k]$;
		\item $x_{i}^{t} = \beta_{i}^{t}$ and $x_{i} \neq \beta_{i} \neq y_{i}$;
		\item 	$D_{i}(x_{i}, x_{-i}) > D(y_i, x_{-i})$; and
		\item $D_{i}^{t}(x_{i}^{t}, x_{ -i }^{t}) - D_{i}^{t}(y_{i}^{t}, x_{ -i }^{t}) \le 1$,
	\end{itemize}
	where $D_{i}(x_{i}, x_{-i})$ denotes that the number of $i$'s neighbors that play a different strategy from $x_{i}$, and also we denote $D_{i}^{t}(x_{i}^{t}, x_{-i}^{t})$ the number of $i$'s neighbors whose $t$-th strategy is not $x_{i}^{t}$.
	
	In this setting, the player $i$ can decrease her cost by moving $x_{i}$ to $y_i$.
	On the other hand, for the cost $c_{i}^{t}$ for $i$ on the $t$-th subgame, it follows that 
	\begin{align*}
		c_{i}^{t}(x_{i}^{t}, x_{ -i }^{t}) &- c_{i}^{t}(y_{i}^{t}, x_{ -i }^{t})\\ 
			&= - \alpha + ( 1 - \alpha) \left( D_{i}^{t}(x_{i}^{t}, x_{-i}^{t}) - D_{i}^{t}(y_{i}^{t}, x_{-i}^{t}) \right)\\
			&\le - \alpha + ( 1 -\alpha )
			= 1 - 2 \alpha
			< 0 .
	\end{align*}
	The first equality holds from the second assumption, and note that $y_{i}^{t} \neq x_{i}^{t}$ in this setting.
	The second inequality follows from the fourth assumption. The final inequality follows from $1/2 < \alpha < 1$.
	
	The above observation implies that $i$ can not improve her cost in the $t$-th subgame.
	Notice that $i$ moves only one bit from the first assumption, the function defined in Eq.\ (\ref{Eq:GeneralizedOrdinalPotentialFunction}) is not a generalize ordinal potential function.
	\qed
\end{example}

The next theorem states that for a discrete preference game $\game$ on a $1$-product metric space, the function $\Psi$ defined in Eq.\ (\ref{Eq:GeneralizedOrdinalPotentialFunction}) is always a generalized ordinal potential function for $\game$; more precisely, $\Psi$ is an exact potential function for $\game$. To prove this theorem, it suffices to show that $\Psi$ equals $\Phi$ defined in Eq.\  (\ref{Eq:ExactPotential-DPS}).

\begin{theorem} \label{Theoremone-product-ExactPotentialGame}
	If a metric space $\mathcal{M}$ is a $1$-product metric space, then the function $\Psi$ defined in Eq.\ (\ref{Eq:GeneralizedOrdinalPotentialFunction}) is an exact potential function for $\game$.
\end{theorem}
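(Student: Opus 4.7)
The plan is to prove the theorem by showing directly that $\Psi \equiv \Phi$, where $\Phi$ is the exact potential function from Eq.~(\ref{Eq:ExactPotential-DPS}). By Lemma \ref{LemmaDPG-is-EPG}, $\Phi$ is already known to be an exact potential function for $\game$; so once the identity $\Psi = \Phi$ is established, the theorem follows immediately.

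The first step is to exploit the $1$-product structure of $\mathcal{M}$. By definition, for any pair of strategy profile points $x_i = (x_i^1,\dots,x_i^k)$ and $y_i = (y_i^1,\dots,y_i^k)$ in $L = L_1 \times \cdots \times L_k$, we have
\begin{equation*}
d(x_i, y_i) \;=\; \sum_{t \in [k]} d_t(x_i^t, y_i^t).
\end{equation*}
I will apply this identity to both types of terms appearing in $\Phi$: the penalty terms $p_i(s)\, d(x_i,s)$ and the edge terms $w_{ij}\, d(x_i,x_j)$.

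The edge-term handling is immediate: $\sum_{\{i,j\}\in E} w_{ij}\, d(x_i,x_j) = \sum_{t\in[k]}\sum_{\{i,j\}\in E} w_{ij}\, d_t(x_i^t, x_j^t)$, which matches the edge part of $\sum_{t\in[k]} \Phi^{(t)}(x^t)$. The main (still routine) bookkeeping is the penalty terms: I will swap the order of summation and use the definition $q_i^t(s^t) = \sum_{u \in L : u^t = s^t} p_i(u)$ to collapse the sum over $s \in L$ into separate sums over each coordinate. Concretely,
\begin{align*}
\sum_{s \in L} p_i(s)\, d(x_i, s)
&= \sum_{s \in L} p_i(s) \sum_{t \in [k]} d_t(x_i^t, s^t)\\
&= \sum_{t \in [k]} \sum_{s^t \in L_t} \Biggl(\sum_{u \in L : u^t = s^t} p_i(u)\Biggr) d_t(x_i^t, s^t)\\
&= \sum_{t \in [k]} \sum_{s^t \in L_t} q_i^t(s^t)\, d_t(x_i^t, s^t).
\end{align*}
Summing over $i \in V$ gives exactly the penalty portion of $\sum_{t \in [k]} \Phi^{(t)}(x^t)$.

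Combining the two pieces yields $\Phi(x) = \sum_{t \in [k]} \Phi^{(t)}(x^t) = \Psi(x)$ for every strategy profile $x$. Since $\Phi$ is an exact potential function for $\game$ by Lemma \ref{LemmaDPG-is-EPG}, so is $\Psi$. There is no real obstacle here; the only care needed is in the reindexing of the penalty sum, which is precisely why the definition $q_i^t(s^t) = \sum_{u \in L : u^t = s^t} p_i(u)$ was tailored for the $1$-product case. (Note that for $\ell$-product metrics with $\ell \ne 1$, this additive decomposition of $d$ fails, which is consistent with Example \ref{Example}.)
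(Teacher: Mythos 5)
Your proposal is correct and follows essentially the same route as the paper's own proof: both establish the identity $\Phi = \Psi$ by using the additive decomposition $d(x,y) = \sum_{t \in [k]} d_t(x^t, y^t)$ of the $1$-product metric, reindexing the penalty sum via $q_i^t(s^t) = \sum_{u \in L : u^t = s^t} p_i(u)$, and splitting the edge terms coordinatewise. Your write-up is, if anything, slightly cleaner in separating the penalty and edge bookkeeping, and your closing remark about why the argument fails for $\ell \neq 1$ matches the role of Example \ref{Example} in the paper.
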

\begin{proof}
We suppose that $\mathcal{M} = (L, d)$ is a $1$-product metric space of $k$ metric spaces, i.e., $d(x, y) = \sum_{t \in [k]} d_{t}(x^{t}, y^{t})$ for all $x, y \in L$.
It suffices to show that $\Phi$ defined in Eq.\ (\ref{Eq:ExactPotential-DPS}) equals to $\Psi$ defined in Eq.\ (\ref{Eq:GeneralizedOrdinalPotentialFunction}).
For any strategy profile $x$, we have

\begin{align*}
	\Phi(x) &= \sum_{ i \in V} \sum_{ s \in L} p_{i}(s) d(s, x_{i}) + \sum_{ \{ i, j \} \in E} w_{ij} d(x_i, x_j)\\
					&= \sum_{ i \in V} \sum_{ s \in L} p_{i}(s) \sum_{ t \in [k]} d_t(s^t, x_{i}^t) + \sum_{ \{ i, j \} \in E} w_{ij} \sum_{ t \in [k]} d_t(s^t, x_i^t)\\
					&= \sum_{ t \in [k]} \sum_{ i \in V} \sum_{ s^t \in L_t} \sum_{ u \in L \colon u^t = s^t}p_{i}(u) d_t(s^t, x_{i}^t) + \sum_{ t \in [k]} \sum_{ \{ i, j \} \in E} w_{ij} d_t(s^t, x_i^t)\\
					&= \sum_{ t \in [k]} \Phi^{(t)}
					= \Psi(x).
\end{align*}

\noindent
In the second equality, we use the fact that $\mathcal{M}$ is a $1$-product metric space.
\end{proof}

Immediately, we obtain the following corollary, which is a generalization of the result by Lolakapuri et al.\ \cite{LBNPD19}.

\begin{corollary} \label{CorollaryGenralization-LBNPD}
	There is a polynomial-time algorithm to find a pure Nash equilibrium for a discrete preference game if the following two conditions hold: (i) the metric space is a $1$-product metric space; and (ii) we have a polynomial-time algorithm to find a pure Nash equilibrium for every subgame.
\end{corollary}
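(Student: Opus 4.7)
The plan is to derive this corollary as an immediate consequence of the two theorems established just above, namely Theorem~\ref{TheoremSufficientCondition} and Theorem~\ref{Theoremone-product-ExactPotentialGame}. First I would invoke Theorem~\ref{Theoremone-product-ExactPotentialGame} under condition~(i) to conclude that the function $\Psi$ defined in (\ref{Eq:GeneralizedOrdinalPotentialFunction}) is an \emph{exact} potential function for $\game$. Since any exact potential function is in particular a generalized ordinal potential function (a strict drop in one player's cost translates into an equal, hence strict, drop in $\Psi$), the premise of Theorem~\ref{TheoremSufficientCondition} is met automatically.

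Next I would spell out the algorithm: for each $t \in [k]$, apply the polynomial-time subgame solver guaranteed by condition~(ii) to $\game_t$ to obtain a pure Nash equilibrium $\hat{x}^t$, and then return $\hat{x} = (\hat{x}^t)_{t \in [k]}$ interpreted as a strategy profile of $\game$ in the manner described in Section~\ref{SecGameProductMetricSpace}. By Theorem~\ref{TheoremSufficientCondition}, $\hat{x}$ is a pure Nash equilibrium for $\game$.

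To finish, I would verify that the overall running time is polynomial in the description length of $\game$. This reduces to two small observations: the number $k$ of factor metric spaces is bounded by the input size since $\mathcal{M}$ is given as a $1$-product of $\mathcal{M}_1,\dots,\mathcal{M}_k$, and the data defining each subgame $\game_t = (G, \mathcal{M}_t, (q_i^t(s^t)))$ can be assembled in polynomial time from $\game$, since $q_i^t(s^t) = \sum_{u \in L : u^t = s^t} p_i(u)$ is a polynomially sized sum over the penalties already in the input. Invoking the subgame solver $k$ times therefore yields a polynomial-time procedure.

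There is essentially no technical obstacle here, as both nontrivial ingredients are already in hand; the only point requiring a line of comment is the exact-implies-generalized-ordinal step that connects Theorem~\ref{Theoremone-product-ExactPotentialGame} to the hypothesis of Theorem~\ref{TheoremSufficientCondition}.
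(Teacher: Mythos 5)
Your proposal is correct and follows exactly the paper's intended route: the paper derives this corollary immediately from Theorem~\ref{Theoremone-product-ExactPotentialGame} (condition (i) makes $\Psi$ an exact, hence generalized ordinal, potential function) combined with Theorem~\ref{TheoremSufficientCondition} and the polynomial-time corollary following it. Your added remarks on constructing the penalties $q_i^t(s^t)$ in polynomial time and on the exact-implies-generalized-ordinal step merely make explicit what the paper leaves implicit.
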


\section{Relationship between Network Coordination Games and Discrete Preference Gamess} \label{SecRelationshipTheseGames}
This section presents the relationship between network coordination games and discrete preference games.
First, we show that every discrete preference game is polynomial-time reducible to a network coordination game.
Second, we provide a class of network coordination games that are polynomial-time reducible to discrete preference games.

\subsection{Reduction from Discrete Preference Games to Network Coordination Games} \label{SecReductionfromDPGtoNCG}
This section shows that a discrete preference game is reducible to a network coordination game in polynomial time.
\begin{lemma} \label{LemmaPolynomial-timeComputability}
	Let $\game$ be a discrete preference game on a graph $G$.
	If we have a polynomial-time algorithm to compute pure Nash equilibria for network coordination games on the graph $G$, then it is also polynomial-time computable to find a pure Nash equilibrium for $\game$.
\end{lemma}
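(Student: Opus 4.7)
The plan is to exhibit a polynomial-time reduction from an arbitrary discrete preference game on $G$ to a network coordination game on the same graph $G$ such that the two games share the same set of pure Nash equilibria. Combined with the assumed polynomial-time algorithm for network coordination games on $G$, this immediately yields a polynomial-time algorithm for the discrete preference game.

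Concretely, given $\game = (G, \mathcal{M} = (L,d), (\beta_i)_{i\in V}, \alpha)$, I would construct a network coordination game $\game' = (G, (S_i), (C_{i,j}, C_{j,i}))$ on the same graph by taking $S_i = L$ for every player $i \in V$ and, for every edge $\{i,j\} \in E$, defining
\begin{align*}
    C_{i,j}(x_i, x_j) \;=\; \frac{\alpha}{|N(i)|}\, d(x_i, \beta_i) \;+\; \frac{\alpha}{|N(j)|}\, d(x_j, \beta_j) \;+\; (1-\alpha)\, d(x_i, x_j),
\end{align*}
and $C_{j,i}(x_j, x_i)$ by the same formula with the roles of $i$ and $j$ swapped. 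Because $d$ is symmetric, we have $C_{i,j}(x_i,x_j) = C_{j,i}(x_j,x_i)$, so $\game'$ is a legitimate network coordination game, and its description size is polynomial in that of $\game$.

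Next I would show that the per-player costs differ only by a term that is constant in $x_i$. Summing over neighbors, the total cost in $\game'$ for player $i$ is
\begin{align*}
    C_i(x) \;=\; \sum_{j \in N(i)} C_{i,j}(x_i, x_j) \;=\; \alpha\, d(x_i, \beta_i) \;+\; (1-\alpha)\sum_{j \in N(i)} d(x_i, x_j) \;+\; \sum_{j \in N(i)} \frac{\alpha}{|N(j)|}\, d(x_j, \beta_j),
\end{align*}
so $C_i(x) = c_i(x) + R_i(x_{-i})$, where $R_i$ depends only on the strategies of players other than $i$. Consequently, $C_i(x_i,x_{-i}) \le C_i(y_i,x_{-i})$ if and only if $c_i(x_i,x_{-i}) \le c_i(y_i,x_{-i})$ for every $y_i \in L$, which means the pure Nash equilibria of $\game$ and $\game'$ coincide. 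Given any pure Nash equilibrium of $\game'$ returned by the assumed algorithm, it is also a pure Nash equilibrium of $\game$.

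The only subtlety is players $i$ with $|N(i)| = 0$, for which $\alpha/|N(i)|$ is undefined. Such isolated players are essentially decoupled from the game: the cost $c_i(x) = \alpha\, d(x_i, \beta_i)$ is uniquely minimized at $x_i = \beta_i$ regardless of the other players' strategies. I would simply fix $x_i := \beta_i$ for every isolated player and apply the reduction to the subgraph induced by the non-isolated players, which has the same structure and is still computable in polynomial time from $G$. This is the main technical wrinkle, but it is minor; the core of the argument is the symmetric splitting of each player's preferred-strategy penalty across her incident edges, which makes the construction both symmetric and cost-preserving up to a term independent of $x_i$.
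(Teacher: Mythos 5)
Your construction is essentially the paper's: you split each player's preference cost evenly across her incident edges and keep the graph $G$ unchanged, exactly as the paper does with the penalty terms $\Delta_k^{-1}\sum_{s \in L} p_k(s)\, d(s, x_k)$ in its (more general) penalties-and-edge-weights model. The only differences are cosmetic --- you verify equivalence directly by showing each player's cost in the coordination game equals $c_i(x)$ plus a term independent of $x_i$, whereas the paper verifies it by showing the two games share the same exact potential function $\Phi' = \Phi$; your explicit handling of isolated players is a small point the paper leaves implicit.
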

\begin{proof}
	To prove this, it is sufficient to construct a polynomial-time reduction from a discrete preference game to a network coordination game that preserves the structure of the players' network.

	For each player $i \in V$, we denote as $\Delta_i := |N(i)|$.
	For each edge $e = \{ i, j \} \in E$, we define the cost function $C_{i, j}$ as follows: for each element $(x_i, x_j) \in L \times L$,
	\begin{align*}
		C_{i, j}(x_i, x_j) = \sum_{k \in \{i, j\}} \Delta_{k}^{-1} \sum_{s \in L} {p_{k} d(s, x_k)} + w_{e} d(x_i, x_j)
	\end{align*}
	which means the cost for $i$ and $j$ when $i$ plays $x_i$ and $j$ plays $x_j$.
	
	The exact potential function $\Phi'$ for a network coordination game is
	\begin{align}
		\Phi'(x) = \sum_{ e = \{i, j\} \in E} C_{i, j}(x_i, x_j)
		\label{Eq:ExactPotentialFunction-of-NCG}
	\end{align}
	for each strategy profile $x = (x_i)_{ i \in V}$ \cite{CD11}.
	
	To see that every pure Nash equilibrium for our network coordination game is also a pure Nash equilibrium for the given discrete preference game, we show that $\Phi'$ is also an exact potential function for a discrete preference game (see Theorem 2.2 in Chapter 2 of \cite{LHS16}).
	
	\begin{align*}
		\Phi'(x) 	&= \sum_{ e = \{i, j\} \in E} C_{i, j}(x_i, x_j)\\
						&= \sum_{ e = \{i, j\} \in E} \sum_{k \in e } \Delta_{k}^{-1} \sum_{s \in L}{ p_k(x) d(s, x_k) } + \sum_{e = \{i, j\} \in E} w_{i, j} d(x_i, x_j)\\ 
						&= \sum_{ i \in V} \left( \sum_{j \in N(i)} \sum_{s \in L}{ \Delta_{i}^{-1} p_i(s) d(s, x_i)} \right) + \sum_{ e = \{ i, j \} \in E} w_{i, j}d(x_i, x_j)\\
						&= \sum_{ i \in V} \sum_{s \in L}{p_i(s) d(s, x_i)} + \sum_{ e = \{ i, j \} \in E} w_{i, j}d(x_i, x_j)\\
						&= \Phi(x).
	\end{align*}
	
	\noindent
	This is an exact potential function for a discrete preference game (see Eq.\ (\ref{Eq:ExactPotential-DPS})).
	Hence, we complete constructing a polynomial-time reduction from a discrete preference game to a network coordination game. 
	Note that our reduction does not change the structure of the graph $G$.
\end{proof}

Recall that Daskalakis and Papadimitriou \cite{DP06} have proven that we can find a pure Nash equilibrium for a graphical game whose players' network has $O(\log n)$-treewidth in polynomial time.
Apt et al.\ \cite{AKRSS17} have shown the polynomial-time computability of a pure Nash equilibrium for a network coordination game whose players' network contains at most one cycle.
Therefore, we immediately obtain the following corollary by using these previous results together with Lemma \ref{LemmaPolynomial-timeComputability}.

\begin{corollary}
	There is a polynomial-time algorithm to compute a pure Nash equilibrium for a discrete preference game if the given players' network $G = (V, E)$ satisfies at least one of the following properties: (i) $G$ has $O(\log |V|)$-treewidth; and (ii) $G$ contains at most one cycle.
\end{corollary}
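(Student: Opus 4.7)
The plan is to derive the corollary by combining Lemma \ref{LemmaPolynomial-timeComputability} with the two cited polynomial-time results for network coordination games, one case for each property. The only nontrivial ingredient is the graph-preserving reduction, and this is exactly what Lemma \ref{LemmaPolynomial-timeComputability} supplies.

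First, given an instance $\game$ of a discrete preference game on $G = (V, E)$, I would apply Lemma \ref{LemmaPolynomial-timeComputability} to construct, in polynomial time, a network coordination game $\game'$ whose underlying graph is exactly $G$ and whose pure Nash equilibria coincide with those of $\game$. The crucial point to highlight (it is already built into the proof of that lemma) is that the reduction does \emph{not} add or delete vertices or edges, so $\game'$ inherits the graph-structural properties of $G$: its treewidth is the same as that of $G$, and its number of cycles is the same as that of $G$.

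Second, I would split into the two announced cases. If $G$ has $O(\log |V|)$-treewidth, then $\game'$ is a graphical game on a graph of $O(\log |V|)$-treewidth, so the algorithm of Daskalakis and Papadimitriou \cite{DP06} finds a pure Nash equilibrium of $\game'$ in polynomial time (existence is guaranteed because a network coordination game is an exact potential game, as recorded in the proof of Lemma \ref{LemmaPolynomial-timeComputability} via Eq.\ (\ref{Eq:ExactPotentialFunction-of-NCG})). If instead $G$ contains at most one cycle, then $\game'$ is a network coordination game on a graph with at most one cycle, and the algorithm of Apt et al.\ \cite{AKRSS17} produces a pure Nash equilibrium of $\game'$ in polynomial time. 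In either case, any pure Nash equilibrium of $\game'$ is a pure Nash equilibrium of $\game$ by Lemma \ref{LemmaPolynomial-timeComputability}, which completes the argument.

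The proof is genuinely a two-line composition, so there is no real obstacle; the only thing worth being careful about is the observation that the reduction in Lemma \ref{LemmaPolynomial-timeComputability} is graph-preserving, because both cited algorithms are structural and their hypotheses must transfer verbatim from $G$ to the graph of $\game'$. Once that is noted, the corollary follows immediately by case analysis.
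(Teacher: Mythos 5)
Your proposal is correct and matches the paper's own argument, which likewise obtains the corollary immediately by combining Lemma \ref{LemmaPolynomial-timeComputability} with the results of Daskalakis and Papadimitriou \cite{DP06} for case (i) and Apt et al.\ \cite{AKRSS17} for case (ii), relying on the fact that the reduction preserves the graph $G$. Your added remark that existence in case (i) is guaranteed because the network coordination game is an exact potential game is a sound (and slightly more explicit) justification than the paper spells out, but it does not change the route.
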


\subsection{Reduction from Network Coordination Games to Discrete Preference Games}
In the previous section, we show that a discrete preference game is a special case of network coordination games.
This section provides a class of network coordination games that are polynomial-time reducible to discrete preference games.
Note that it is known that equilibrium computation for our class of network coordination games is easy by using a submodular function minimizing algorithm, such as \cite{LSW15, Orl09}.
However, we solve equilibrium computation faster by reducing a discrete preference game (see Remark \ref{RemarkSubmodularMinimization} for details).

We consider the complexity of a two-strategic network coordination game such that for each pair of players $i, j$, the cost $C_{i, j}$ between $i$ and $j$ is symmetric, i.e., $C_{i, j}(0, 1) = C_{i, j}(1, 0)$ and a submodular function, i.e., 
\begin{align} \label{Eq:Submodular}
	 C_{i, j}(1, 0) + C_{i, j}(0, 1) \ge C_{i, j}(1, 1) + C_{i, j}(0, 0), 
\end{align}
where we denote by $\{0, 1\}$ the set of strategies.

In this setting, we show that we can find a pure Nash equilibrium in $O(n^2 \Delta)$ time by reducing it to a discrete preference game on a path metric space, where $n$ is the number of players, and $\Delta$ is the maximum degree of a given graph.

\begin{theorem} \label{TheoremEfficientSymmetricNetworkCoordinationGame}
Suppose that a two-strategic network coordination game $\game = ( G = (V, E), ( \{0, 1\})_{v \in V}, (C_e)_{e \in E} )$ satisfies that for each edge $\{i, j\} \in E$, a cost function $C_{i, j}$ is a symmetric submodular function. In this setting, we can find a pure Nash equilibrium for $\game$ in  $O(n^2 \Delta)$ time, where $n$ is the number of players, and $\Delta$ is the maximum degree of $G$.
\end{theorem}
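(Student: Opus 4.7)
The plan is to reduce the symmetric submodular network coordination game $\game$ to a discrete preference game with penalties on the two-point path metric $(\{0,1\},d)$, $d(0,1)=1$, and then solve the latter via best response dynamics, exploiting its two-strategic structure. The crucial step is a decomposition of each edge cost into per-endpoint and disagreement contributions. Writing $a_{ij}=C_{i,j}(0,0)$, $b_{ij}=C_{i,j}(1,1)$, and $c_{ij}=C_{i,j}(0,1)=C_{i,j}(1,0)$ (the last equality by symmetry), I would verify, by checking all four pairs $(x_i,x_j)\in\{0,1\}^2$, the identity
\begin{equation}
C_{i,j}(x_i,x_j)\;=\;\tfrac{1}{2}C_{i,j}(x_i,x_i)+\tfrac{1}{2}C_{i,j}(x_j,x_j)+w_{ij}\,\mathbf{1}[x_i\ne x_j],
\end{equation}
with $w_{ij}:=c_{ij}-\tfrac{1}{2}(a_{ij}+b_{ij})$. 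Symmetry lets the diagonal contributions split equally between $i$ and $j$, and submodularity (\ref{Eq:Submodular}) is used exactly once to ensure $w_{ij}\ge 0$.

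The reduced instance $\game'=(G,\mathcal{M},(p_i))$ is then built on the same graph $G$, with strategy space $\mathcal{M}=(\{0,1\},d)$ (a path metric on two points), edge weight $w_{ij}$, and penalties
\begin{equation}
p_i(0):=\sum_{j\in N(i)}\tfrac{1}{2}C_{i,j}(1,1), \qquad p_i(1):=\sum_{j\in N(i)}\tfrac{1}{2}C_{i,j}(0,0),
\end{equation}
both non-negative because costs are non-negative. Summing the edge identity above over $j\in N(i)$ and comparing with the DPG cost formula $c_i(x)=p_i(1-x_i)+\sum_{j\in N(i)}w_{ij}d(x_i,x_j)$ shows that $i$'s cost in $\game$ and $i$'s cost in $\game'$ differ only by $\sum_{j\in N(i)}\tfrac{1}{2}C_{i,j}(x_j,x_j)$, which does not depend on $x_i$. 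Hence player $i$'s best responses coincide in $\game$ and $\game'$, and the two games share exactly the same set of pure Nash equilibria.

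To conclude, I would solve $\game'$ by best response dynamics: since $\game'$ is two-strategic, the technique of \cite{CKO18} cited in the remark following Theorem \ref{TheoremUpperBoundforBestResponseDynamics} bounds the number of iterations by $O(n)$; each iteration costs $O(\Delta)$ to recompute the best response of the moving player and $O(n)$ to locate a profitable deviator, for a total of $O(n^2\Delta)$, which also absorbs the $O(n\Delta)$ cost of constructing $\game'$. The main obstacle I anticipate is precisely the role of submodularity: without assumption (\ref{Eq:Submodular}) the weight $w_{ij}$ could be negative and the reduction would fall outside the DPG framework, where neither PNE existence nor the $O(n)$ best-response bound is guaranteed. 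Once the decomposition identity is in hand, the remainder of the argument is routine bookkeeping.
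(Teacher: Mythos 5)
Your proposal is correct and follows the same overall strategy as the paper: the same reduction of $\game$ to a discrete preference game with penalties on the two-point path metric, with the same edge weights and penalties up to the scaling convention (you take $d(0,1)=1$ and $w_{ij}=C_{i,j}(0,1)-\tfrac{1}{2}(C_{i,j}(0,0)+C_{i,j}(1,1))$, while the paper takes $d(0,1)=\tfrac{1}{2}$ and $w_{ij}=2C_{i,j}(0,1)-C_{i,j}(0,0)-C_{i,j}(1,1)$, which is the identical game), and with submodularity (\ref{Eq:Submodular}) used in exactly the same place, to guarantee $w_{ij}\ge 0$. Where you genuinely diverge is in \emph{how} equivalence of equilibria is certified: the paper proves it globally, by a computation (Lemma \ref{LemmaReduction-NCG-to-DPG}) showing that the exact potential (\ref{Eq:ExactPotential-DPS}) of the reduced game coincides with the potential (\ref{Eq:ExactPotentialFunction-of-NCG}) of $\game$, whereas you prove it locally, via the pointwise identity $C_{i,j}(x_i,x_j)=\tfrac{1}{2}C_{i,j}(x_i,x_i)+\tfrac{1}{2}C_{i,j}(x_j,x_j)+w_{ij}\mathbf{1}[x_i\ne x_j]$, so that each player's costs in the two games differ by a term independent of her own strategy. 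Your route is shorter and arguably more transparent, since it yields strategic equivalence (identical best responses) directly, without invoking the theory of exact potential games; the paper's route buys a reusable statement (equality of potentials) that slots into its general framework. The one soft spot is your final algorithmic step: the $O(n)$ iteration bound of \cite{CKO18} is stated for the \emph{parameter} model on unweighted graphs, while your reduced game has arbitrary nonnegative edge weights and penalties, so that citation does not directly apply. The gap is easily closed — because all $w_{ij}\ge 0$, a monotone schedule (players flip, say, from $0$ to $1$ only, and never flip back) has each player move at most once, or one can simply invoke the Tree Metric Algo of \cite{LBNPD19} as the paper does, whose two-strategy guarantee is exactly ``each player moves at most once''; either way the $O(n^2\Delta)$ bound stands, provided you also account for locating a profitable deviator, which naively costs $O(n\Delta)$ per round rather than your stated $O(n)$ (still within $O(n^2\Delta)$ overall).
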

\begin{proof}
Let $\game = ( G = (V, E), ( \{0, 1\})_{v \in V}, (C_e)_{e \in E} )$ be a network coordination game.
We now reduce this game to a discrete preference game on the path metric space $\mathcal{M} = ( \{0, 1\}, d)$, where $d(x, y) = 1/2$ if $x \neq y$, otherwise $d(x, y) = 0$.
Furthermore, our reduction preserves the construction of the players' network; hence, the resulting discrete preference game is on the graph $G = (V, E)$.

For each edge $\{ i, j \} \in E$, the weight is $w_{i, j} = 2 C_{i, j}(1, 0) - C_{i, j}(0, 0) - C_{i, j}(1, 1)$.
For each player $i \in V$, the penalty is $p_{i}(s) = \sum_{j \in N(i)} {C_{i, j}(1-s, 1-s)}$ for each $s \in \{0, 1\}$.
Note that every weight $w_{i, j}$ on an edge $\{i, j\} \in E$ is non-negative from our restrictions.

We denote by  $\game' = ( G' = ( V, E, (w_{e})_{e \in E}), \mathcal{M}, (p_{v}(0), p_{v}(1))_{v \in V})$ the resulting discrete preference game with penalties.
From Lemma \ref{LemmaDPG-is-EPG}, the exact potential function $\Phi$ for $\game'$ is
\begin{align}
	 \Phi(x) = \sum_{i \in V} \sum_{s \in \{0, 1\}} p_{i}(s) d(s, x_{i}) + \sum_{ \{ i, j \} \in E} w_{i, j} d(x_i, x_i).
	 \label{Eq:Reduction-NCG-to-DPG}
\end{align} 

We show, in Lemma \ref{LemmaReduction-NCG-to-DPG}, that $\Psi$ equals the exact potential function defined in Eq.\ (\ref{Eq:ExactPotentialFunction-of-NCG}) for the given network coordination game $\game$.
Proving this, we complete the reduction from $\game$ to a discrete preference game on a path metric space.

\begin{lemma} \label{LemmaReduction-NCG-to-DPG}
	The above function $\Phi$ is an exact potential function for $\game$.
\end{lemma}
\begin{proof}
It suffices to show that $\Phi$ defined in Eq.\ (\ref{Eq:Reduction-NCG-to-DPG}) equals the function defined in  Eq.\ (\ref{Eq:ExactPotentialFunction-of-NCG}).
By definition, it follows that 
\begin{align*}
	\Phi &(x) 
		= \sum_{i \in V} \sum_{k = 0, 1} p_{i}(k) d(k, x_{i}) + \sum_{ \{ i, j \} \in E} w_{i, j} d(x_i, x_j)\\
		&= \sum_{i \in V} \sum_{j \in N(i)} \left( C_{i, j}(1, 1) d(0, x_{i}) + C_{i, j}(0, 0) d(1, x_{i}) \right)\\ 
		&	+ \frac{1}{2} \sum_{i \in V}\sum_{ j \in N(i) } \big( 2 C_{i, j}(0, 1) - C_{i, j}(1, 1) - C_{i, j}(0, 0) \big) d(x_i, x_j)\\
		&= \frac{1}{2} \sum_{i \in V} \sum_{j \in N(i)} \big(  C_{i, j}(1, 1) d(0, x_{i}) + C_{i, j}(0, 0) d(1, x_{i})\\
		& +  C_{i, j}(1, 1) d(0, x_{j}) + C_{i, j}(0, 0) d(1, x_{j}) \big)\\
		& + \frac{1}{2} \sum_{i \in V}\sum_{ j \in N(i) } \big( 2 C_{i, j}(0, 1) - C_{i, j}(1, 1) - C_{i, j}(0, 0) \big) d(x_i, x_j)\\
		&= \frac{1}{2} \sum_{ i \in V} \sum_{j \in N(i)} \big(  C_{i, j}(1, 1) d(0, x_{i}) + C_{i, j}(0, 0) d(1, x_{i}) \\
		& +  C_{i, j}(1, 1) d(0, x_{j}) + C_{i, j}(0, 0) d(1, x_{j})\\
		& + 2 C_{i, j}(0, 1) d(x_i, x_j) - C_{i, j}(1, 1) d(x_i, x_j)\\& - C_{i, j}(0, 0) d(x_i, x_j) \big)\\
		&= \frac{1}{2} \sum_{ i \in V} \sum_{j \in N(i)} \bigg( C_{i, j}(0, 0) \big( d(1, x_i) + d(1, x_j) - d(x_i, x_j) \big) \\
		&~~~~~~~~~ + C_{i, j}(1, 1) \big( d(0, x_i) + d(0, x_j) - d(x_i, x_j) \big)\\&~~~~~~~~~ + 2 C_{i, j}(0, 1)  d(x_i, x_j) \bigg)\\
		&= \frac{1}{2} \sum_{i \in V} \sum_{j \in N(i)} C_{i, j}(x_i, x_j).
\end{align*}
Note that, in the third equality, for each player $i \in V$, we add the additional value $C_{i, j}(1, 1) d(0, x_{j}) + C_{i, j}(0, 0) d(1, x_{j})$ for every neighbor $j \in N(i)$. That value also appears in the terms of $i$ in the third equation. 
Dividing the new summation
\begin{dmath*}
\sum_{i \in V} \sum_{j \in N(i)} \bigg( C_{i, j}(1, 1) d(0, x_{i}) + C_{i, j}(0, 0) d(1, x_{i}) +  C_{i, j}(1, 1) d(0, x_{j}) + C_{i, j}(0, 0) d(1, x_{j})  \bigg)
\end{dmath*}
into half, it is equivalent to the the first summation of the third equation. Therefore, the fourth equality holds.
The final equality follows from the following fact: for each $s \in \{0, 1\}$ and each pair of $i, j \in V$,
\begin{align*}
	d(s, x_i) + d(s, x_j) - d(x_i, x_j) = \begin{cases}
 		1 &\mbox{ if } x_i = x_j = 1 - s\\
 		0 & \mbox{ otherwise.}
 \end{cases}
\end{align*}

Furthermore, we have 
\begin{align*}
	\frac{1}{2} \sum_{i \in V} \sum_{j \in N(i)} C_{i, j}(x_i, x_j) = \sum_{ \{ i, j \} \in E} C_{i, j}(x_i, x_j)	
\end{align*}
which is the exact potential function for the network coordination game $\game$, defined in Eq (\ref{Eq:ExactPotentialFunction-of-NCG}).
Thus, we complete the proof of Lemma \ref{LemmaReduction-NCG-to-DPG}.
\end{proof}

Since any two-element finite metric space is a tree metric pace, we can apply {\it Tree Metric Algo},  proposed by Lolakapuri et al.\ \cite{LBNPD19}, to find a pure Nash equilibrium for $\game'$. {Here, \it Tree Metric Algo} is an algorithm for computing a pure Nash equilibrium for a discrete preference game on a tree metric space.

In a two-strategic setting, each player moves her strategy at most once during  {\it Tree Metric Algo}.
Recall that Lolakapuri et al.\ \cite{LBNPD19} showed the following theorem:
\begin{theorem}[Lolakapuri et al.\ \cite{LBNPD19}]
	For a discrete preference game that has $n$ players and whose tree metric space has $m$ points, the Tree Metric Algo outputs a pure Nash equilibrium on the given game, in $O(nm \cdot n \mathrm{EO})$-time. Here $\mathrm{EO}$ is the time to evaluate the cost function for a player.
\end{theorem}
\noindent
Since the metric space has only two points and the cost function for each player can be evaluated in $O(\Delta)$-time, we can compute a pure Nash equilibrium for a $\game'$ in $O(n^2 \Delta)$ time.

Since each pure Nash equilibrium for $\game'$ agrees with a pure Nash equilibrium for the network coordination game $\game$, we obtain a pure Nash equilibrium for $\game$ from the above argument. We complete the proof of Theorem \ref{TheoremEfficientSymmetricNetworkCoordinationGame}.
\end{proof}

\begin{remark} \label{RemarkSubmodularMinimization}
If every cost function on a network coordination game is a submodular function, then the exact potential function $\Phi'$ defined in Eq.\ (\ref{Eq:ExactPotentialFunction-of-NCG}) is also a submodular function. This implies that we can apply an algorithm for submodular function minimization, such as \cite{LSW15,Orl09}, to find a pure Nash equilibrium. In particular, we solve it in $O(n^3 \log^{2}(n) \cdot \mathrm{EO} + n^4 \log^{O(1)}(n))$ time \cite{LSW15}, where $n$ is the number of players, and $\mathrm{EO}$ is the time to evaluate $\Phi'$, which is bounded by the number of edges.

We obtain a pure Nash equilibrium that minimizes the corresponding potential function by using the submodular minimization algorithm.
Note that equilibrium computation allows any pure Nash equilibrium as a solution; that is, a solution that we obtain does not necessarily minimize the corresponding function.
In Theorem \ref{TheoremEfficientSymmetricNetworkCoordinationGame}, we exploit this fact, and thus, we can find a pure Nash equilibrium faster than applying an algorithm for submodular function minimization.
Our result implies that equilibrium computation is solved at least $O(n)$ factor faster.
\end{remark}

\section{Conclusion}
We have studied the complexity of computing a pure Nash equilibrium for a discrete preference game on a grid graph. As mentioned in Section \ref{SecIntroduction}, our motive behind this work is to resolve the main open question for network coordination games: Is it tractable to find a pure Nash equilibrium for a network coordination game on a graph with degree four? Under negative conjecture, we study the complexity of computing a pure Nash equilibrium for a discrete preference game, a subclass of network coordination games.

Unfortunately, it is still open whether finding a pure Nash equilibrium for a discrete preference game on a graph with degree four is tractable. We have shown the polynomial-time computability for a discrete preference game with a parameter on a $k$-dimensional grid graph when it satisfies the two conditions \ref{ConditionA} and \ref{ConditionB}. It is the first result for efficient computability for a discrete preference game with neither $O(\log n)$-treewidth nor a tree metric space. Note that our result holds under somewhat artificial conditions.  An interesting open question worth considering is whether it is also tractable if we remove the condition \ref{ConditionA} or \ref{ConditionB}.

Another interesting direction would be the complexity of computing pure Nash equilibria for discrete preference games on the discrete metric space with three or more elements. We provide, in Section \ref{SecDiscreteMetric}, an upper bound for the number of iterations of the best response dynamics for a discrete preference game with a parameter on a discrete metric space. The discrete preference metric space with three or more strategies is one of the simple environments among finite metric spaces, not a tree metric space.

Finally, we have discussed the complexity of computing a two-strategic network coordination game whose cost functions are symmetric submodular functions. In this case, the game is reducible to a discrete preference game on a path metric space, and we can find a pure Nash equilibrium faster than an algorithm for submodular function minimization.
An open question worth considering is whether we can also compute a pure Nash equilibrium faster than an algorithm for submodular function minimization when a cost function is asymmetric.

\section*{Acknowledgments}
This work was supported by JSPS KAKENHI Grant Numbers JP21J10845 and JP20H05795.

\printbibliography[heading = bibintoc]

\end{document}